\documentclass[12pt]{amsart}

\usepackage[T1]{fontenc}
\usepackage{lmodern}
\usepackage{amsmath,amssymb,amsthm}
\usepackage{geometry,microtype}
\usepackage[colorlinks=true,linkcolor=blue,citecolor=blue,urlcolor=blue]{hyperref}
\newtheorem{theorem}{Theorem}
\newtheorem{lemma}{Lemma}
\newtheorem{corollary}{Corollary}
\newtheorem{proposition}{Proposition}

\theoremstyle{remark}
\newtheorem{observation}{Observation}
\newtheorem{remark}{Remark}

\newcommand{\E}{\mathbb E}
\newcommand{\F}{\mathbb F}
\newcommand{\C}{\mathbb C}
\newcommand{\Tr}{\operatorname{Tr}}
\newcommand{\qn}[2]{\lVert #1\rVert_{Q^{#2}}}

\title{Norm properties of quantum uniformity measures}

\author{Berke Ünal}
\address{Department of Mathematics \\ Koç University \\ Istanbul, Turkey}
\email{bunal21@ku.edu.tr}

\subjclass[2020]{Primary 81P68;
Secondary 11B30, 46B20}

\keywords{Quantum uniformity measures,
higher-order Fourier analysis,
triangle inequality,
norming linear systems,
Clifford hierarchy}

\begin{document}

\begin{abstract}
We give a complete classification of the norm properties of quantum uniformity measures, introduced by Bu, Gu, and Jaffe to analytically characterize the Clifford hierarchy. Resolving an open question they posed, we show that quantum uniformity measures define norms precisely for $k=2,3,4$. In particular, we establish the triangle inequality for $k=4$, whereas for every $k\geq5$ it fails in some dimension. The higher-order result is proved using the structure theory of norming systems of linear equations over finite fields. We also study the geometry of quantum uniformity measures, establishing explicit bounds on the modulus of convexity whenever they define norms. Finally, we give a complete classification of the norm properties of matrix-valued Gowers uniformity functionals, showing that only $k=2$ defines a norm.
\end{abstract}

\maketitle

\section{Introduction}

The Clifford hierarchy, introduced by Gottesman
and Chuang~\cite{GC}, plays an important role in fault-tolerant quantum computation. Bu, Gu, and Jaffe~\cite{BGJ} developed a framework of quantum higher-order Fourier analysis and
introduced quantum uniformity measures $\qn{\,\cdot\,}{k}$ that characterize the Clifford
hierarchy analytically. In particular, they showed that a unitary $U$ belongs to the $k$-th level of the Clifford hierarchy if and only if $\qn{U}{k+1}=1$.

They showed that $\qn{\cdot}{2}$ and $\qn{\cdot}{3}$ are norms, and asked whether the triangle inequality holds for $k\geq4$~\cite[Remark~43]{BGJ}. We prove that $\qn{\cdot}{4}$ is a norm. We also prove that for every $k\geq5$, the triangle inequality fails.

\begin{theorem}\label{thm:main}
For every prime local dimension $p$ and every integer $n\geq1$, all linear operators $S,T$ on $\mathcal H^{\otimes n}$ satisfy
\[
  \qn{S+T}{4}\leq \qn{S}{4}+\qn{T}{4}.
\]
Consequently, $\qn{\cdot}{4}$ is a norm on the vector space of linear operators on $\mathcal H^{\otimes n}$.
\end{theorem}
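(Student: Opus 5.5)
The plan is to reduce the triangle inequality to a Gowers--Cauchy--Schwarz type inequality for a multilinear form on phase space, and then conclude by multilinear expansion.

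\emph{Step 1: rewrite the quantity as a multilinear form.} Expand $\qn{T}{4}^{16}$ using the Weyl (generalized Pauli) basis of operators on $\mathcal H^{\otimes n}$. Writing $T=\sum_{x\in\F_p^{2n}}\Xi_T(x)\,w(x)$, the definition of Bu, Gu and Jaffe should become a sum over phase space
\[
\qn{T}{4}^{16}=\Lambda(\Xi_T,\dots,\Xi_T),\qquad
\Lambda(f_1,\dots,f_{16})=\sum_{\mathbf x\in V}\ \omega(\mathbf x)\prod_{i=1}^{16}\mathcal C^{\epsilon_i}f_i(L_i(\mathbf x)).
\]
Here $V$ is the solution space of a linear system over $\F_p$, the $L_i$ are linear forms, $\mathcal C$ denotes complex conjugation, and $\omega$ is a phase coming from the commutation relations of the Weyl operators. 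I would absorb the phases by passing to the twisted (symplectic) convolution structure, or by a change of variables. The aim is an expression of the same shape as a Gowers-type count $t_L(f)$ for an explicit system $L$; this is the bridge to the norming-systems framework mentioned in the abstract.

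\emph{Step 2: the key inequality.} Prove
\[
|\Lambda(f_1,\dots,f_{16})|\le\prod_{i=1}^{16}\Lambda(f_i,\dots,f_i)^{1/16}.
\]
I would proceed as in the classical proof for the $U^k$ norms. Group the forms $L_i$ according to one free variable $h$, so that $\Lambda$ is a sum over $h$ of a product of two conjugate-symmetric factors $A_h(f_{\text{first half}})\,\overline{B_h(f_{\text{second half}})}$. Cauchy--Schwarz in $h$ then bounds $|\Lambda|^2$ by the product of two forms in which one half of the functions has been duplicated. Iterate this over four "directions", one per level of the $k=4$ structure, until every slot carries a single function. Positivity of $\Lambda(f,\dots,f)$ also has to be checked; it should follow from writing the diagonal form as a sum of squared moduli.

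\emph{Step 3: conclude.} Expand $\Lambda(\Xi_{S+T},\dots)$ by multilinearity into $2^{16}$ terms and bound each term by the inequality of Step 2. This gives
\[
\qn{S+T}{4}^{16}\le\bigl(\qn S4+\qn T4\bigr)^{16}.
\]
Homogeneity is immediate from the definition. Definiteness follows because $\qn{T}{4}\ge\qn{T}{2}$ (by monotonicity, or by an easy direct bound), and $\qn{\cdot}{2}$ is already known to be a norm. So $\qn{\cdot}{4}$ is a norm.

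\emph{Main obstacle.} The hardest part is Step 2, that is, verifying that the $k=4$ system really is norming. The danger is that the phases $\omega$, or a lack of symmetry in the system, prevent every Cauchy--Schwarz step from closing up exactly; for $k\ge5$ this presumably fails. I would first try to diagonalize the phases by a symplectic change of variables, reducing to an untwisted linear system. I would then look for a structural description of that system, as a blow-up or tensor-power of a known norming system such as a $U^2$-type or even-cycle system. Failing that, I would perform the Cauchy--Schwarz iteration directly, with the splitting variable chosen by hand at each of the four stages.
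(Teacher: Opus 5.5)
\textbf{There is a genuine gap: your Step 2 is the entire content of the theorem, and the proposal gives no argument that proves it.}

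The classical direction-by-direction Cauchy--Schwarz relies on the symmetry of the $U^k$ form under permutations of the cube coordinates. The quantum form lacks this symmetry, because the derivatives are composed through noncommutative products inside the trace. Concretely, fix the other variables and average over the innermost derivative variable (the $h$ of \eqref{eq:recursive}). For $k=4$ this gives a Weyl twirl $\E_h\tau\bigl(\alpha_h(B_1)C_1\alpha_h(B_2)C_2\alpha_h(B_3)C_3\alpha_h(B_4)C_4\bigr)$ of interleaved blocks. That is not a product $A_h\overline{B_h}$ of a first-half factor and a second-half factor, so there is nothing to apply Cauchy--Schwarz to in that direction.

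Since the triangle inequality fails for every $k\geq5$, any proof must exploit something specific to $k=4$. Your plan does not say what that is, as your last paragraph concedes.

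The fallback through norming systems is not available either. The structure theory of \cite{CCLSV} concerns real-valued functions on untwisted systems and gives only necessary conditions, such as isomorphism of components. This paper uses it solely to prove failure for $k\geq5$. Your form has complex Weyl coefficients and symplectic phases, which are quadratic on the solution space and are not removed in general by a linear change of variables.

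\textbf{The missing idea} is a Cauchy--Schwarz inequality for the pairing $\langle S,T\rangle_{Q^4}:=\bigl(\E_h\qn{\alpha_h(S)T^*}{3}^8\bigr)^{1/8}$. It handles only the one new direction and uses the $k=3$ theory of \cite{BGJ} as a black box.
\begin{itemize}
\item By \eqref{eq:q3} and trace cyclicity, each term of $\langle S,T\rangle_{Q^4}^8$ equals $|\langle\alpha_h(S^*W_{-a}S),T^*W_bT\rangle|^4$. This pairs an $S$-only operator with a $T$-only operator.
\item The fourth power is linearized by the lift $Y^\sharp=Y\otimes Y\otimes\overline{Y}\otimes\overline{Y}$.
\item The $h$-average then becomes the orthogonal projection $P=\E_h\alpha_h^\sharp$. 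This gives $\langle S,T\rangle_{Q^4}^8=p^{-2n}\langle P\Phi(S),P\Phi(T)\rangle$ with $\Phi(S)=\sum_a(S^*W_aS)^\sharp$.
\item A single Cauchy--Schwarz in the lifted Hilbert space yields $\langle S,T\rangle_{Q^4}\le\qn{S}{4}\qn{T}{4}$. In Weyl coordinates this is Cauchy--Schwarz over frequency quadruples with $c_1+c_2=c_3+c_4$, not over a derivative variable.
\end{itemize}

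The paper then expands $\partial_h(S+T)=\partial_hS+\partial_hT+\alpha_h(S)T^*+\alpha_h(T)S^*$ and applies Minkowski in $L^8(A;\mathsf{Q}_3)$. This is valid because $\qn{\cdot}{3}$ is a norm, and it gives $\qn{S+T}{4}^2\le\bigl(\qn{S}{4}+\qn{T}{4}\bigr)^2$.

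Your Step 2 is in fact true. It follows from the $k=3$ generalized Schwarz inequality of \cite{BGJ}, H\"older in $h$, and this lemma, as in Section~4. Once you have it, your Step 3 goes through. Your definiteness argument also goes through: $\qn{T}{4}\ge\qn{T}{2}$ follows from comparing \eqref{eq:definition} with \eqref{eq:positive} via Jensen.
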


\begin{theorem}\label{thm:higher-failure}
For every prime local dimension $p$ and every integer $k\geq5$, there exist an integer $n\geq1$ and linear operators $S,T$ on
$\mathcal H^{\otimes n}$ such that
\[
  \qn{S+T}{k}>\qn{S}{k}+\qn{T}{k}.
\]
\end{theorem}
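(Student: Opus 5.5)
We reduce the failure of the triangle inequality to a statement about norming linear systems over $\F_p$. The first step is to write $\qn{T}{k}^{2^k}$ (following the Bu–Gu–Jaffe definition via the Weyl/characteristic function expansion) as an average over a linear configuration. With $\Xi_T(x)=p^{-n}\Tr(w(x)^\dagger T)$ the characteristic function of $T$ on $V=\F_p^{2n}$, the $2^k$-th power should become a sum of products of $\Xi_T$ and $\overline{\Xi_T}$ evaluated on $2^{k-?}$ points. These points are constrained by a system of linear equations $L_k$ over $\F_p$, possibly twisted by symplectic phases coming from the Weyl commutation relations. For diagonal operators $T=\sum_x f(x)|x\rangle\langle x|$, the measure should collapse to a classical functional $t_{L_k}(\hat f)^{1/m}$ of a fixed linear system $L_k$. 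The symplectic twists vanish on the isotropic subspace $\{0\}\times\F_p^n$, so no phase factors survive on diagonal operators.

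Assuming this identification, the triangle inequality for $\qn{\cdot}{k}$ would imply that the functional $g\mapsto t_{L_k}(g)^{1/m}$ is a seminorm on complex-valued functions on $\F_p^n$ for all $n$. We then invoke the structure theory of norming systems (in the style of Hatami's norming graphs and its extensions to linear systems over finite fields). Systems whose functional is a norm must satisfy strong structural constraints: they are equivalent to Gowers-type or "reflection-generated" systems, with each variable lying in a fixed number of equations and every form having the same length. The key step is to check that $L_k$ violates one of these necessary conditions for $k\geq5$. The criterion we would try first is Hatami's necessary condition that a norming system be "weakly norming" with every pair of forms behaving symmetrically. If that does not suffice, we would use the condition relating the functional to the $\ell^{m}$ structure on indicator-type test functions.

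Once $L_k$ is shown to be non-norming, we unwind the characterization. Non-norming gives an explicit (or existential) pair $g,h$ on some $\F_p^n$ with $t(g+h)^{1/m}>t(g)^{1/m}+t(h)^{1/m}$. To be safe we might need non-negativity of $t$ on the test functions; the standard characterization is stated for real functions, so we choose $g,h$ real. We then take $\hat f_S=g$ and $\hat f_T=h$ by inverse Fourier transform, and let $S,T$ be the corresponding diagonal operators. Because the identification is linear in $T$, this gives $\qn{S+T}{k}>\qn{S}{k}+\qn{T}{k}$. Working with prime $p$ uniformly should be fine, since the norming theory holds over any $\F_p$, although for $p=2$ the symplectic phases need separate care.

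The main obstacle is the middle step: establishing that $L_k$ is not norming for all $k\geq5$ while being norming at $k=4$, consistent with Theorem~\ref{thm:main}. This forces us to locate the exact structural invariant that changes at $k=5$. We would compute the configuration for $k=4$ and $k=5$ explicitly and compare. We expect that for $k\geq5$ some variable's equation structure breaks the symmetry required of norming systems, for instance a form of the wrong length or an asymmetric "Cauchy–Schwarz tree". If the structural theorem does not apply directly, a fallback is an explicit perturbative counterexample, $g=1+\epsilon\phi$ with a suitable character combination. For this we expand $t$ to second order and show the Hessian of $t^{1/m}$ is not positive semidefinite, which contradicts convexity.
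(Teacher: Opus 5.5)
\textbf{The gap is at your first step: diagonal operators cannot give a counterexample for any $k$.}

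Take $T=\mathrm{diag}(f)$ and $h=(u,v)\in A$. The phase of $W_h$ cancels under conjugation, $Z^u$ commutes with $T$, and $X^v$ translates. Hence $\alpha_h(T)=\mathrm{diag}\bigl(f(\cdot-v)\bigr)$ and $\partial_hT=\mathrm{diag}\bigl(f(\cdot-v)\overline{f(\cdot)}\bigr)$. So the quantum derivative is the classical multiplicative derivative $\Delta_vg(x)=g(x-v)\overline{g(x)}$, and the $u$-coordinate drops out. Iterating and using $\tau(\mathrm{diag}(g))=\E_x g(x)$ gives
\[
  \qn{\mathrm{diag}(f)}{k}^{2^k}
  =\E_{v_1,\ldots,v_k\in\F_p^n}\E_{x\in\F_p^n}\,
   \Delta_{v_k}\cdots\Delta_{v_1}f(x)
  =\|f\|_{U^k(\F_p^n)}^{2^k}.
\]
On the diagonal subspace, $\qn{\cdot}{k}$ is therefore exactly the classical Gowers norm, which satisfies the triangle inequality for every $k$. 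Whatever functional of $\hat f$ you write down is just $\|f\|_{U^k}$ in disguise, i.e.\ the cube system, which is norming. So there is no change between $k=4$ and $k=5$ to find there. Your perturbative fallback fails for the same reason, since $\|\cdot\|_{U^k}$ is convex. Your remark that the symplectic phases vanish on an isotropic subspace is precisely why this slice is useless: it is a commutative subalgebra on which everything becomes classical.

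\textbf{The missing idea is to use genuinely noncommutative test operators.} The paper takes rank-one operators $\psi_fe_0^*$ with $\psi_f=\sum_xf(x)e_x$ and $f$ real.

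\begin{itemize}
\item Since $\partial_h(\psi\phi^*)=\langle\phi,W_h^*\phi\rangle\,W_h\psi\psi^*$, the recursive formula and Weyl invariance give $\qn{\psi\phi^*}{j}^{2^j}=m_{2^{j-1}}(\phi)\,\qn{\psi\psi^*}{j-1}^{2^{j-1}}$, where $m_r(\psi)=\E_{a\in A}|\langle\psi,W_a\psi\rangle|^r$ is the $r$-th Weyl moment.
\item Iterating, with $m_r(e_0)=p^{-n}$ and the $\ell^4$ formula at level $2$, yields $\qn{\psi_fe_0^*}{k}^{2^k}=p^{-3n}\,m_4(\psi_f)\prod_{j=2}^{k-2}m_{2^j}(\psi_f)$.
\item Each $m_r(\psi_f)$ is a positive multiple of $t_{L_r}(f)$ for an explicit indecomposable $r\times 2r$ system $L_r$.
\item Hence $\qn{\psi_fe_0^*}{k}$ is a positive multiple of $\|f\|_{L^{(k)}}$, where $L^{(k)}=L_4\oplus\bigoplus_{j=2}^{k-2}L_{2^j}$.
\end{itemize}

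This exposes the invariant you were looking for. For $k=4$ the components are two copies of $L_4$, so there is no obstruction. For $k\geq5$ a component $L_8$ appears beside $L_4$. Non-isomorphic components rule out norming by the specific criterion of Cho et al.\ (any two components of a norming system are isomorphic), not by the vaguer symmetry conditions you mention.

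Finally, the paper argues by contraposition: a putative norm $\qn{\cdot}{k}$ pulled back along the linear injection $f\mapsto\psi_fe_0^*$ would make $\|\cdot\|_{L^{(k)}}$ a norm. This avoids your step of extracting an explicit violating pair from non-normingness. That step would also require you to rule out a failure of definiteness, rather than of the triangle inequality, for $\|\cdot\|_{L}$.
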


This gives a complete classification of the norm properties of quantum uniformity measures. For $k=1$, only definiteness fails.
For $k=2,3,4$, the quantum uniformity measures are norms by~\cite[Proposition~39]{BGJ}, \cite[Theorem~42]{BGJ}, and Theorem~\ref{thm:main} respectively. For every $k\geq5$, only the triangle inequality fails by Theorem~\ref{thm:higher-failure}.

We also investigate the norm properties of matrix-valued uniformity functionals, studied at second order by Gowers and Hatami~\cite{GH} and at higher orders by Westdorp~\cite{Westdorp} and Jamneshan and Thom~\cite{JT}. We show that $k = 2$ is the only one that defines a norm. In particular, the triangle-inequality assertion of Jamneshan and Thom~\cite[Corollary~A.7]{JT} does not hold in the stated generality.

The paper is organized as follows. In Section~2, we recall
some properties of quantum uniformity measures and
include a remark on their limiting behavior as $k\to\infty$.
In Section~3, we introduce the quantum uniformity pairing
$\langle \cdot,\cdot\rangle_{Q^4}$, establish a Cauchy--Schwarz inequality for it, and use this to prove Theorem~\ref{thm:main}. In Section~4, we establish the generalized quantum Schwarz inequality and Hanner's inequality for the quantum uniformity measures in all cases where they define norms.

In Section~5, we recall some definitions concerning norming linear systems, following Cho et al.~\cite{CCLSV}. We then construct a family of linear systems and show that they are not norming using the structure theory of norming linear systems. In Section~6, we use this family to prove Theorem~\ref{thm:higher-failure}. Finally, in Section~7, we give a complete classification of the norm properties of
matrix-valued uniformity functionals.

\section{Preliminaries}

Throughout, let $p$ be prime and let $n\geq1$ be an integer. Let $\mathcal H=\C^p$. We work on the Hilbert space $\mathcal H^{\otimes n}$ with orthonormal basis \((e_x)_{x\in\F_p^n}\) where $e_x=e_{x_1}\otimes\cdots\otimes e_{x_n}$ for
$x=(x_1,\ldots,x_n)\in\F_p^n$.

For a linear operator $T$ on $\mathcal H^{\otimes n}$, define
its normalized trace by
\[
\tau(T)=\frac{1}{p^n}\Tr(T),
\]
and equip the vector space of linear operators on
$\mathcal H^{\otimes n}$ with the normalized Hilbert--Schmidt
inner product
\[
\langle S,T\rangle=\tau(S^*T).
\]

Throughout, we write $x\equiv y$ if $x=\lambda y$ for some $\lambda\in\C$ with $|\lambda|=1$.

\subsection{Weyl operators and quantum uniformity measures} Set $A:=\F_p^n\times\F_p^n$. For $a=(u,v)\in A$, choose Weyl operators following the phase convention of~\cite{BGJ} implicitly
\[
W_a\equiv Z^uX^v,
\]
where $X^v(e_x):=e_{x+v}$, $Z^u(e_x):=\omega^{u\cdot x}e_x$, and $\omega:=e^{2\pi i/p}$.
The Weyl operators are unitary and form an orthonormal basis of the Hilbert space of operators
on $\mathcal H^{\otimes n}$.

For $h\in A$ and a linear operator $T$ on $\mathcal H^{\otimes n}$, put $\alpha_h(T)=W_hTW_h^*$.
The Weyl relations~\cite[Section~2]{BGJ} give
\begin{equation}\label{eq:weyl}
\alpha_h(W_a)\equiv W_a.
\end{equation}
\begin{equation}\label{eq:adjoint}
W_{-a}= W_a^*.
\end{equation}

Since scalar Weyl phases disappear under conjugation, we have \(\alpha_{h+h'}=\alpha_h\alpha_{h'}\). It follows that $(\alpha_h)_{h\in A}$ is a unitary representation of the additive group $A$ on this operator Hilbert space.

The quantum derivative is defined by
\[
\partial_hX=\alpha_h(X)X^*.
\]
The quantum uniformity measures
are defined by
\begin{equation}\label{eq:definition}
  \qn{X}{k}^{2^k}
  =\E_{h_1,\ldots,h_k\in A}
     \tau\bigl(\partial_{h_k}\cdots\partial_{h_1}X\bigr).
\end{equation}
By~\cite[Proposition~26]{BGJ},
\begin{equation}\label{eq:positive}
  \qn{X}{k}^{2^k}
  =\E_{h_1,\ldots,h_{k-1}\in A}
    \left|\tau\bigl(\partial_{h_{k-1}}\cdots
    \partial_{h_1}X\bigr)\right|^2.
\end{equation}
The recursive identity~\cite[Proposition~27]{BGJ} is
\begin{equation}\label{eq:recursive}
  \qn{X}{k}^{2^k}
  =\E_{h\in A}
    \qn{\alpha_h(X)X^*}{k-1}^{2^{k-1}}.
\end{equation}

\subsection{Norm properties} We begin our discussion of the norm properties of quantum uniformity measures. For every $k\geq1$, positivity follows from~\eqref{eq:positive} and absolute homogeneity follows from~\eqref{eq:definition},
\begin{equation}\label{eq:q-homogeneity}
  \qn{\lambda X}{k}=|\lambda|\qn{X}{k}.
\end{equation}

For $k=1$, \eqref{eq:positive} gives
\[
  \qn{X}{1}=|\tau(X)|.
\]
Thus $\qn{\cdot}{1}$ is a seminorm, but fails definiteness. For example, \(\qn{W_a}{1}=0\) for all $a\neq0$.

For $k\geq2$, taking $h_1=\cdots=h_{k-1}=0$ in~\eqref{eq:positive} gives
\begin{equation}\label{eq:zero-term}
  \qn{X}{k}^{2^k}
  \geq
  p^{-2n(k-1)}
  \left|\tau\!\left((XX^*)^{2^{k-2}}\right)\right|^2.
\end{equation}
Thus $\qn{\,\cdot\,}{k}$ is definite for every $k\geq2$.
\begin{observation}\label{obs:norm-reduction}
For every $k\geq2$, the quantum uniformity measure
$\qn{\,\cdot\,}{k}$ is a norm if and only if it satisfies the
triangle inequality.
\end{observation}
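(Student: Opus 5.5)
The observation follows once we check the three non-triangle axioms of a norm for $k\geq2$: nonnegativity, absolute homogeneity, and definiteness. Together with the triangle inequality these are exactly the norm axioms. So the equivalence holds as soon as these three properties are verified unconditionally for every $k\geq2$. The \emph{only if} direction is then trivial, since a norm satisfies the triangle inequality by definition.

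For the \emph{if} direction, I would proceed in the following order.

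\textbf{Nonnegativity.} By \eqref{eq:positive}, $\qn{X}{k}^{2^k}$ is an average of squared moduli, hence a nonnegative real number. Therefore $\qn{X}{k}$ is a well-defined nonnegative real.

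\textbf{Absolute homogeneity.} This is \eqref{eq:q-homogeneity}. Since $\alpha_h(\lambda X)(\lambda X)^*=|\lambda|^2\alpha_h(X)X^*$, the recursive identity \eqref{eq:recursive} gives $\qn{\lambda X}{k}^{2^k}=|\lambda|^{2^k}\qn{X}{k}^{2^k}$ by induction on $k$, starting from $\qn{\lambda X}{1}=|\lambda||\tau(X)|$. Taking $2^k$-th roots yields the claim.

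\textbf{Definiteness.} Suppose $\qn{X}{k}=0$ with $k\geq2$. Then \eqref{eq:zero-term} forces $\tau\bigl((XX^*)^{2^{k-2}}\bigr)=0$. The operator $XX^*$ is positive semidefinite, so $(XX^*)^{2^{k-2}}$ is positive semidefinite with eigenvalues $\mu_i^{2^{k-2}}$, where $\mu_i\geq0$ are the eigenvalues of $XX^*$. A vanishing trace of a PSD operator forces all its eigenvalues to vanish, so every $\mu_i=0$. Hence $XX^*=0$, and therefore $\tau(X^*X)=\tau(XX^*)=0$, which gives $X=0$.

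To justify \eqref{eq:zero-term} itself, I would note that at $h_1=\cdots=h_{k-1}=0$ we have $\alpha_0=\mathrm{id}$. Hence $\partial_0 Y=YY^*$, which is self-adjoint, and iterating gives $(XX^*)^{2^{k-2}}$. Indeed, $\partial_0(Y)=Y^2$ for self-adjoint $Y$, and the first step produces $XX^*$. This term is one of the $|A|^{k-1}=p^{2n(k-1)}$ nonnegative summands of the average, which gives the weight $p^{-2n(k-1)}$.

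The main (minor) obstacle is simply this bookkeeping for \eqref{eq:zero-term}, namely tracking the power $2^{k-2}$ correctly. Everything else is immediate.
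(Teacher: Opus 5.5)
Your proposal is correct and follows the paper's argument: nonnegativity from \eqref{eq:positive}, absolute homogeneity (you induct via \eqref{eq:recursive}, while the paper reads it directly off \eqref{eq:definition}), and definiteness from the $h_1=\cdots=h_{k-1}=0$ term \eqref{eq:zero-term}, so that only the triangle inequality remains. Your extra bookkeeping for the exponent $2^{k-2}$ and the positive-semidefinite trace argument simply fill in details the paper leaves implicit.
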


The second-order Fourier formula
\cite[Proposition~39]{BGJ} gives
\begin{equation}\label{eq:q2}
  \qn{X}{2}^{4}
  =\sum_{a\in A}\left|\tau(W_a^*X)\right|^4.
\end{equation}
In particular, $\qn{\cdot}{2}$ is an $\ell^4$ norm, while
$\qn{\cdot}{3}$ is a norm by~\cite[Theorem~42]{BGJ}.
Combining~\eqref{eq:q2} with~\eqref{eq:recursive} yields
\begin{equation}\label{eq:q3}
  \qn{X}{3}^{8}
  =\frac{1}{p^{2n}}\sum_{a,b\in A}
       \left|\tau(X^*W_aXW_b)\right|^4.
\end{equation}

\begin{remark}
Although we will prove that the functionals $\qn{\,\cdot\,}{k}$ fail to be a norm
for every $k\geq5$, they nevertheless
converge to a norm as $k\to\infty$. More precisely,
\[
  \qn{\,\cdot\,}{k}\longrightarrow \|\cdot\|_{\mathrm{op}}
\]
uniformly on bounded sets as $k\to\infty$.

\begin{proof}
By~\cite[Equation~(7)]{BGJ},
\[
  \qn{X}{k}\leq\|X\|_{\mathrm{op}}.
\]
On the other hand, by positivity of $(XX^*)^{2^{k-2}}$,
\begin{align*}
  \qn{X}{k}
  &\overset{\eqref{eq:zero-term}}{\geq}
  p^{-n(k-1)/2^{k-1}}
  \tau\!\left((XX^*)^{2^{k-2}}\right)^{1/2^{k-1}}
  \notag\\
  &\geq
  p^{-n(k-1)/2^{k-1}}
  \left(
    p^{-n}
    \left\|(XX^*)^{2^{k-2}}\right\|_{\mathrm{op}}
  \right)^{1/2^{k-1}}
  \notag\\
  &=
  p^{-nk/2^{k-1}}
  \|XX^*\|_{\mathrm{op}}^{1/2}
  \notag\\
  &=
  p^{-nk/2^{k-1}}
  \|X\|_{\mathrm{op}}
\end{align*}
Hence
\[
  p^{-nk/2^{k-1}}\|X\|_{\mathrm{op}}
  \leq
  \qn{X}{k}
  \leq
  \|X\|_{\mathrm{op}}.
\]
Since $p^{-nk/2^{k-1}}\to1$, the claim follows.
\end{proof}
\end{remark}

\section{The fourth quantum uniformity measure is a norm}

In this section we prove Theorem 1. The main ingredients of the proof are the Cauchy--Schwarz inequality for the quantum uniformity pairing defined below and the triangle inequality in a finite-dimensional Bochner $L^8$ space associated with the third quantum uniformity measure.

\subsection{The quantum uniformity pairing}
The fourth quantum uniformity measure, although it is a norm as we establish in this section, is not induced by an inner product. For instance, in the single qubit case $p=2$ and $n=1$, the identity operator $I$ and the Pauli $Z$ operator fail to satisfy the parallelogram identity.

Nevertheless, we can define a pairing that behaves like an inner product inducing $\qn{\cdot}{4}$. 
For linear operators $S,T$ on $\mathcal H^{\otimes n}$, define the \emph{quantum uniformity pairing} by
\begin{equation}\label{eq:pairing}
\langle S,T\rangle_{Q^4}
:=
\left(
\E_{h\in A}
\qn{\alpha_h(S)T^*}{3}^{8}
\right)^{1/8}.
\end{equation}

Equation~\eqref{eq:recursive} implies
\begin{equation}
\langle S,S\rangle_{Q^4}=\qn{S}{4}^{2}.
\end{equation}
In particular, the pairing is positive definite.

\begin{remark}
The quantum uniformity pairing is symmetric
\[
\langle S,T\rangle_{Q^4}
=
\langle T,S\rangle_{Q^4}.
\]
Although this is not needed in the proof of
Theorem~\ref{thm:main}, it further illustrates the
inner-product-like behavior of the pairing.

\begin{proof}
First observe that
\[
\alpha_{-h}\!\left((\alpha_h(S)T^*)^*\right)
=
\alpha_{-h}(T)S^*.
\]
By~\eqref{eq:q3}, $\qn{X}{3}=\qn{X^*}{3}$, and by
~\cite[Proposition~28]{BGJ}, the quantum uniformity measures are
invariant under Weyl conjugations. Therefore,
\[
\left\lVert \alpha_h(S)T^*\right\rVert_{Q^3}
=
\left\lVert \alpha_{-h}(T)S^*\right\rVert_{Q^3}.
\]
Taking eighth powers and averaging over $h\in A$, the claim follows.
\end{proof}
\end{remark}

We now establish the Cauchy--Schwarz inequality.

\begin{lemma}\label{lem:cs}
For all linear operators $S,T$ on $\mathcal H^{\otimes n}$,
\[
|\langle S,T\rangle_{Q^4}|
\leq
\|S\|_{Q^4}\,\|T\|_{Q^4}.
\]
\end{lemma}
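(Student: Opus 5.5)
Expanding $\langle S,T\rangle_{Q^4}^8$ through the third-order formula turns it into an eighth-moment sum of traces, after which two Cauchy--Schwarz steps over a suitable group of averaging variables separate $S$ from $T$. Since the pairing is nonnegative, the absolute value in the statement is harmless, and it suffices to show $\langle S,T\rangle_{Q^4}^8\le \qn{S}{4}^8\qn{T}{4}^8$. By \eqref{eq:q3},
\[
\langle S,T\rangle_{Q^4}^8=\frac{1}{p^{2n}}\E_{h\in A}\sum_{a,b\in A}\bigl|\tau\bigl(T S_h^* W_a S_h T^* W_b\bigr)\bigr|^4,
\qquad S_h:=\alpha_h(S).
\]
I write $\tau(TS_h^*W_aS_hT^*W_b)=\tau\bigl((S_h^*W_aS_h)(T^*W_bT)\bigr)$ by cyclicity. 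Both factors are Weyl-twisted "squares": one depends only on $S$, the other only on $T$.

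Next I expand each factor in the Weyl basis. Put $F_{a,h}(c):=\tau(W_c^*S_h^*W_aS_h)$ and $G_b(d):=\tau(W_d^*T^*W_bT)$. Then
\[
\tau\bigl((S_h^*W_aS_h)(T^*W_bT)\bigr)=\sum_c \overline{F_{a,h}(c)}^{\,\prime}\,G_b(-c)
\]
with the appropriate phase from \eqref{eq:adjoint}. This is an $\ell^2$ pairing over $c$. A further Fourier identity should express $F_{a,h}(c)$, after averaging over $h$, in terms of $\tau(S^*W_aSW_{c'})$-type quantities with a character twist; this is the quantum analogue of Gowers' box structure. The goal is a representation
\[
\langle S,T\rangle_{Q^4}^8=\E_{x}\,\Phi_S(x)\,\overline{\Phi_T(x)}
\]
over some averaging variable $x$ (a tuple of group elements in $A$), with $\E_x|\Phi_S(x)|^2=\qn{S}{4}^{8}$ and similarly for $T$. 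Once that is in place, Cauchy--Schwarz in $x$ gives $\langle S,T\rangle^8\le \qn{S}{4}^4\qn{T}{4}^4\cdot(\text{another factor})$. I expect two applications of Cauchy--Schwarz to be needed, as in the classical proof of the Gowers--Cauchy--Schwarz inequality for $U^2$-type pairings with the fourth power inside. The first separates the $|\cdot|^4=|\cdot|^2|\cdot|^2$ structure; the second, over $h$ and $a$, separates $S$ from $T$.

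As a more concrete route, I would instead apply Cauchy--Schwarz to the inner sum first. Write $|\tau(XY)|^4$ with $X=S_h^*W_aS_h$ and $Y=T^*W_bT$. Then use the identity $\sum_{b}|\tau(XY_b)|^2$-type Parseval sums over $b$, which turn $\sum_{a,b}|\tau(XY)|^4$ into $\sum_{a,b}\tau(\ldots)$, a single trace with two copies of $S$ and two copies of $T$. One Cauchy--Schwarz over the summation pairs each $S$-piece with itself. The resulting $S$-only quantity should be recognized through \eqref{eq:recursive} and \eqref{eq:q3} as $\E_h\qn{\alpha_h(S)S^*}{3}^8=\qn{S}{4}^{16}$. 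The $T$-only quantity is recognized likewise, invoking Weyl-conjugation invariance \cite[Proposition~28]{BGJ} to absorb $\alpha_h$. Taking square roots appropriately then yields $\langle S,T\rangle_{Q^4}^{16}\le\qn{S}{4}^{16}\qn{T}{4}^{16}$.

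The main obstacle is bookkeeping: finding the right grouping of variables so that, after Cauchy--Schwarz, the $S$-side collapses exactly to $\E_h\qn{\alpha_h(S)S^*}{3}^8$ rather than to a mixed term still containing $h$-shifts of both operators. I would first try to make the Weyl expansion of both $S_h^*W_aS_h$ and $T^*W_bT$ explicit. I would then use the orthogonality $\E_{h}\omega^{\langle h,c\rangle}=\delta_{c,0}$ to eliminate $h$ in favour of a free Weyl index, so that the symmetric roles of $S$ and $T$ become visible before applying Cauchy--Schwarz.
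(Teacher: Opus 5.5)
There is a genuine gap. Your starting point agrees with the paper's: you expand $\langle S,T\rangle_{Q^4}^8$ via \eqref{eq:q3} and use cyclicity to reach $\tau\bigl((S_h^*W_aS_h)(T^*W_bT)\bigr)$. But the proposal never produces an identity that separates $S$ from $T$, and the mechanisms you suggest for it would not work. Because $\langle S,S\rangle_{Q^4}=\qn{S}{4}^2$, any representation $\langle S,T\rangle_{Q^4}^8=\E_x\Phi_S(x)\overline{\Phi_T(x)}$ must satisfy $\E_x|\Phi_S(x)|^2=\qn{S}{4}^{16}$, not $\qn{S}{4}^{8}$. With that normalization a single Cauchy--Schwarz finishes, and no second step or extra factor is involved.

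Your concrete route also breaks down. Parseval over $b$ evaluates $\sum_b|\tau(MW_b)|^2$, but the summand here is a fourth power, and $\sum_b|\tau(MW_b)|^4$ is an $\ell^4$ (that is, $Q^2$-type) quantity. Moreover, $\langle S,T\rangle_{Q^4}^8$ has degree $8$ in $S$ and $8$ in $T$, so it cannot become one trace with two copies of each. Its single-trace expansions interleave $S$- and $T$-factors in a way that no trace Cauchy--Schwarz splits. The obstacle you name at the end is the real one: $\alpha_h$ acts on $S$ only.

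The missing idea is to linearize the fourth power and treat the $h$-average as an orthogonal projection that can be placed on both sides. The paper writes $|\langle Y,Z\rangle|^4=\langle Y^\sharp,Z^\sharp\rangle$ with $Y^\sharp=Y\otimes Y\otimes\overline Y\otimes\overline Y$. Then $\E_h$ becomes $P=\E_h\alpha_h^\sharp$, which is an orthogonal projection because $h\mapsto\alpha_h^\sharp$ is a unitary representation. With $\Phi(S)=\sum_a(S^*W_aS)^\sharp$ this gives $\langle S,T\rangle_{Q^4}^8=p^{-2n}\langle P\Phi(S),\Phi(T)\rangle=p^{-2n}\langle P\Phi(S),P\Phi(T)\rangle$. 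Applying Cauchy--Schwarz before this symmetrization would produce $\|\Phi(T)\|$ in place of $\|P\Phi(T)\|$, which is too large. After it, the case $S=T$ identifies both factors.

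Your closing suggestion is the coordinate form of this argument, and it does work if carried out. Write $\alpha_h(W_c)=\chi_c(h)W_c$ and $F^X_a(c):=\tau(W_c^*X^*W_aX)$. Then $S_h^*W_aS_h=\chi_a(-h)\sum_c\chi_c(h)F^S_a(c)W_c$ and $\tau(W_cT^*W_bT)=\overline{F^T_{-b}(c)}$. Expanding the fourth power and averaging over $h$ imposes $c_1+c_2=c_3+c_4$:
\[
\langle S,T\rangle_{Q^4}^8=\frac{1}{p^{2n}}\sum_{c_1+c_2=c_3+c_4}\Psi_S(c)\overline{\Psi_T(c)},\qquad
\Psi_X(c)=\sum_{a\in A}F^X_a(c_1)F^X_a(c_2)\overline{F^X_a(c_3)F^X_a(c_4)}.
\]
Cauchy--Schwarz over this constraint set, together with the diagonal case $p^{-2n}\sum|\Psi_X|^2=\qn{X}{4}^{16}$, proves the lemma. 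As written, however, the proposal stops short of any such identity, so the central step is missing.
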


\begin{proof}
First, for $h,a,b\in A$, we have
\begin{align}
\alpha_h(S^*W_{-a}S)
&=\alpha_h(S^*)\alpha_h(W_{-a})\alpha_h(S)
\notag\\
&\overset{\eqref{eq:weyl}}{\equiv}
\alpha_h(S^*)W_{-a}\alpha_h(S)
\notag\\
&\overset{\eqref{eq:adjoint}}{\equiv}
\alpha_h(S^*)W_a^*\alpha_h(S).
\label{eq:weyl-expansion}
\end{align}

Taking adjoints and using trace cyclicity, we obtain
\begin{align}
\left\langle
\alpha_h(S^*W_{-a}S),T^*W_bT
\right\rangle
&=\tau\left(
\alpha_h(S^*W_{-a}S)^*T^*W_bT
\right)
\notag\\
&\overset{\eqref{eq:weyl-expansion}}{\equiv}
\tau\left(
\alpha_h(S^*)W_a\alpha_h(S)T^*W_bT
\right)
\notag\\
&=\tau\left(
T\alpha_h(S^*)W_a\alpha_h(S)T^*W_b
\right).
\label{eq:trace-identity}
\end{align}

Next, take the tensor product of two copies of the Hilbert space of operators on $\mathcal H^{\otimes n}$, equipped with the normalized Hilbert--Schmidt inner product, followed by two copies of its conjugate Hilbert space. On this tensor-product Hilbert space, define the tensor lifts
\[
Y^\sharp
=Y\otimes Y\otimes\overline Y\otimes\overline Y.
\]
Similarly, define
\[
\alpha_h^\sharp
=\alpha_h\otimes\alpha_h
\otimes\overline{\alpha_h}
\otimes\overline{\alpha_h}.
\]
These satisfy
\begin{equation}\label{eq:lift-inner}
\langle Y^\sharp,Z^\sharp\rangle
=|\langle Y,Z\rangle|^4.
\end{equation}
Moreover,
\begin{equation}\label{eq:lift-equiv}
(\alpha_h(Y))^\sharp
=\alpha_h^\sharp(Y^\sharp).
\end{equation}

Consequently, by the preceding identities,
\begin{align}
\left|\tau\left(
T\alpha_h(S^*)W_a\alpha_h(S)T^*W_b
\right)\right|^4
&\overset{\eqref{eq:trace-identity}}{=}
\left|\left\langle
\alpha_h(S^*W_{-a}S),T^*W_bT
\right\rangle\right|^4
\notag\\
&\overset{\eqref{eq:lift-inner}}{=}
\left\langle
(\alpha_h(S^*W_{-a}S))^\sharp,
(T^*W_bT)^\sharp
\right\rangle
\notag\\
&\overset{\eqref{eq:lift-equiv}}{=}
\left\langle
\alpha_h^\sharp((S^*W_{-a}S)^\sharp),
(T^*W_bT)^\sharp
\right\rangle.
\label{eq:trace-lift}
\end{align}

Since $(\alpha_h^\sharp)_{h\in A}$ is a
unitary representation, the operator
\begin{equation}\label{eq:projection}
P:=\E_{h\in A}\alpha_h^\sharp
\end{equation}
is an orthogonal projection. Hence
\begin{equation}\label{eq:projection-property}
P^2=P=P^*.
\end{equation}

For fixed $a,b\in A$, averaging
\eqref{eq:trace-lift} over $h\in A$ gives
\begin{align}
\E_{h\in A}
\left|\tau\left(
T\alpha_h(S^*)W_a\alpha_h(S)T^*W_b
\right)\right|^4
&\overset{\eqref{eq:trace-lift}}{=}
\E_{h\in A}
\left\langle
\alpha_h^\sharp((S^*W_{-a}S)^\sharp),
(T^*W_bT)^\sharp
\right\rangle
\notag\\
&\overset{\eqref{eq:projection}}{=}
\left\langle
P(S^*W_{-a}S)^\sharp,
(T^*W_bT)^\sharp
\right\rangle.
\label{eq:averaged-trace}
\end{align}

Define
\begin{equation}\label{eq:Phi}
\Phi(S):=\sum_{a\in A}(S^*W_aS)^\sharp.
\end{equation}

Summing over $a,b\in A$, we obtain
\begin{align}
\langle S,T\rangle_{Q^4}^8
&\overset{\eqref{eq:pairing}}{=}
\E_{h\in A}\qn{\alpha_h(S)T^*}{3}^{8}
\notag\\
&\overset{\eqref{eq:q3}}{=}
\frac{1}{p^{2n}}\sum_{a,b\in A}\E_{h\in A}
\left|\tau\left(
T\alpha_h(S^*)W_a\alpha_h(S)T^*W_b
\right)\right|^4
\notag\\
&\overset{\eqref{eq:averaged-trace}}{=}
\frac{1}{p^{2n}}\sum_{a,b\in A}
\left\langle
P(S^*W_{-a}S)^\sharp,
(T^*W_bT)^\sharp
\right\rangle
\notag\\
&\overset{\eqref{eq:Phi}}{=}
\frac{1}{p^{2n}}
\langle P\Phi(S),\Phi(T)\rangle
\notag\\
&\overset{\eqref{eq:projection-property}}{=}
\frac{1}{p^{2n}}
\langle P\Phi(S),P\Phi(T)\rangle.
\label{eq:pairing-representation}
\end{align}
In particular, setting $S=T=X$ in
\eqref{eq:pairing-representation} yields
\begin{equation}\label{eq:kernel-diagonal}
\|X\|_{Q^4}^{8}=\frac{1}{p^n}\|P\Phi(X)\|.
\end{equation}
Therefore, by the Cauchy--Schwarz inequality for Hilbert--Schmidt inner product,
\begin{align*}
\langle S,T\rangle_{Q^4}^8
&\overset{\eqref{eq:pairing-representation}}{=}
\frac{1}{p^{2n}}
\langle P\Phi(S),P\Phi(T)\rangle\\
&\leq
\frac{1}{p^{2n}}
\|P\Phi(S)\|\|P\Phi(T)\|\\
&\overset{\eqref{eq:kernel-diagonal}}{=}
\|S\|_{Q^4}^{8}\|T\|_{Q^4}^{8}
\end{align*}
Taking eighth roots proves the desired inequality.
\end{proof}

\subsection{Proof of Theorem~\ref{thm:main}}

Equip $A$ with the uniform probability measure and let $\mathsf{Q}_3$ denote the Banach space of linear operators on $\mathcal H^{\otimes n}$ equipped with the norm
$\qn{\cdot}{3}$.

We work in the finite-dimensional Bochner space \[L^8(A;\mathsf{Q}_3).\] Its norm is given by
\begin{equation}\label{eq:L8}
\|F\|_{L^8}
=\left(\E_{h\in A}\qn{F(h)}{3}^{8}\right)^{1/8}.
\end{equation}
Throughout, expressions depending on $h\in A$ are regarded as operator-valued functions on $A$, and their $L^8$ norms are taken over this variable.

By~\eqref{eq:L8} and~\eqref{eq:recursive},
\begin{equation}\label{eq:L8-derivative}
\|\partial_hX\|_{L^8}=\qn{X}{4}^{2},
\end{equation}
and
\begin{equation}\label{eq:kernel}
\|\alpha_h(S)T^*\|_{L^8} = \langle S,T\rangle_{Q^4}.
\end{equation}

\begin{proof}[Proof of Theorem~\ref{thm:main}]
Expanding the quantum derivative gives
\[
\partial_h(S+T)
=\partial_hS+\partial_hT
+\alpha_h(S)T^*+\alpha_h(T)S^*.
\]
By the triangle inequality in $L^8$, we obtain
\[
\|\partial_h(S+T)\|_{L^8}
\leq \|\partial_hS\|_{L^8}
+\|\partial_hT\|_{L^8}
+\|\alpha_h(S)T^*\|_{L^8}
+\|\alpha_h(T)S^*\|_{L^8}.
\]
By~\eqref{eq:L8-derivative} and~\eqref{eq:kernel}, it follows that
\[
\qn{S+T}{4}^{2}
\leq \qn{S}{4}^{2}+\qn{T}{4}^{2}
+ \langle S,T\rangle_{Q^4} + \langle T,S\rangle_{Q^4}
\]
By Lemma~\ref{lem:cs},
\[
\langle S,T\rangle_{Q^4}
+\langle T,S\rangle_{Q^4}
\leq
2\qn{S}{4}\qn{T}{4}.
\]
Combining these inequalities, we conclude that
\[
\qn{S+T}{4}^{2}
\leq \qn{S}{4}^{2}+\qn{T}{4}^{2}
+2\qn{S}{4}\qn{T}{4}
=\bigl(\qn{S}{4}+\qn{T}{4}\bigr)^2.
\]
Taking square roots proves the triangle inequality. By Observation~\ref{obs:norm-reduction}, $\qn{\cdot}{4}$ is therefore a norm, completing the proof.
\end{proof}

\section{Geometry of quantum uniformity norms}
Before turning to the proof of Theorem~\ref{thm:higher-failure}, we investigate some properties of quantum uniformity norms, that is, quantum uniformity measures of orders $k=2,3,4$. We first establish the generalized quantum Schwarz inequality for $k=4$, extending the known results for $k=2,3$. We then use this inequality to study the geometry of quantum uniformity norms. In particular, we prove an analogue of Hanner's inequality and deduce uniform convexity with explicit bounds on the modulus of convexity.

\subsection{The generalized quantum Schwarz inequality}

Throughout, for $k\geq2$, write each
$\mathbf v\in\{0,1\}^k$ uniquely as
$\mathbf v=(\mathbf u,\varepsilon)$, where
$\mathbf u\in\{0,1\}^{k-1}$ and $\varepsilon\in\{0,1\}$.
For $h\in A$, we introduce the notation
\begin{equation}
S_{\mathbf u}^{(h)}
:=
\alpha_h\!\left(S_{(\mathbf u,0)}\right)
S_{(\mathbf u,1)}^*,
\qquad
\mathbf u\in\{0,1\}^{k-1}.
\end{equation}
Equivalently to~\cite[Definition~34]{BGJ}, for $k\geq1$ define recursively
\begin{equation}\label{eq:bgj-cube-recursion}
\left\{
S_{\mathbf v}
\right\}_{\mathbf v\in\{0,1\}^{k}}
(a_k,\ldots,a_1)
:=
\begin{cases}
\,\alpha_{a_1}(S_0)S_1^*,
& k=1,\\[4pt]
\displaystyle
\left\{
S_{\mathbf u}^{(a_1)}
\right\}_{\mathbf u\in\{0,1\}^{k-1}}
(a_k,\ldots,a_2),
& k\geq2.
\end{cases}
\end{equation}
The associated $k$-th order quantum uniformity inner functional~\cite[Definition~35]{BGJ} is
\begin{equation}\label{eq:uniformity-functional}
\left\langle
\left\{
S_{\mathbf v}
\right\}_{\mathbf v\in\{0,1\}^{k}}
\right\rangle_{Q^k}
:=
\E_{a_k,\ldots,a_1\in A}
\tau\!\left(
\left\{
S_{\mathbf v}
\right\}_{\mathbf v\in\{0,1\}^{k}}
(a_k,\ldots,a_1)
\right).
\end{equation}

Bu, Gu, and Jaffe established the generalized quantum Schwarz inequality for $k=2$~\cite[Proposition~39]{BGJ} and $k=3$~\cite[Proposition~41]{BGJ}, that is, for every family $\{S_{\mathbf v}\}_{\mathbf v\in\{0,1\}^k}$ of linear operators on $\mathcal H^{\otimes n}$,
\begin{equation}\label{eq:q3-generalized-schwarz}
\left|
\left\langle
\left\{
S_{\mathbf v}
\right\}_{\mathbf v\in\{0,1\}^k}
\right\rangle_{Q^k}
\right|
\leq
\prod_{\mathbf v\in\{0,1\}^k}
\qn{S_{\mathbf v}}{k}.
\end{equation}

We now prove that this inequality also holds for $k=4$.
\begin{proposition}
Let
$\{S_{\mathbf v}\}_{\mathbf v\in\{0,1\}^4}$
be linear operators on $\mathcal H^{\otimes n}$. Then
\[
\left|
\left\langle
\left\{
S_{\mathbf v}
\right\}_{\mathbf v\in\{0,1\}^4}
\right\rangle_{Q^4}
\right|
\leq
\prod_{\mathbf v\in\{0,1\}^4}
\qn{S_{\mathbf v}}{4}.
\]
\end{proposition}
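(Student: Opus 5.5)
We reduce the $k=4$ inner functional to an average of $k=3$ inner functionals, apply the $k=3$ generalized quantum Schwarz inequality~\eqref{eq:q3-generalized-schwarz}, and then decouple the resulting product with Hölder's inequality and Lemma~\ref{lem:cs}.

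\emph{Reduction to $k=3$.} By the recursion~\eqref{eq:bgj-cube-recursion}, the outermost variable $a_1$ enters only through the eight operators $S^{(a_1)}_{\mathbf u}=\alpha_{a_1}(S_{(\mathbf u,0)})S_{(\mathbf u,1)}^*$, $\mathbf u\in\{0,1\}^3$. Hence
\[
\left\langle\{S_{\mathbf v}\}_{\mathbf v\in\{0,1\}^4}\right\rangle_{Q^4}
=\E_{h\in A}\left\langle\{S^{(h)}_{\mathbf u}\}_{\mathbf u\in\{0,1\}^3}\right\rangle_{Q^3}.
\]
Applying~\eqref{eq:q3-generalized-schwarz} for each fixed $h$ gives
\[
\left|\left\langle\{S_{\mathbf v}\}\right\rangle_{Q^4}\right|
\leq \E_{h\in A}\prod_{\mathbf u\in\{0,1\}^3}\qn{S^{(h)}_{\mathbf u}}{3}.
\]

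\emph{Hölder.} The product has eight factors, so the generalized Hölder inequality on $A$ with uniform measure and exponents all equal to $8$ yields
\[
\E_{h}\prod_{\mathbf u}\qn{S^{(h)}_{\mathbf u}}{3}\le\prod_{\mathbf u}\left(\E_h\qn{\alpha_h(S_{(\mathbf u,0)})S_{(\mathbf u,1)}^*}{3}^8\right)^{1/8}
=\prod_{\mathbf u}\left\langle S_{(\mathbf u,0)},S_{(\mathbf u,1)}\right\rangle_{Q^4},
\]
where the last equality is the definition~\eqref{eq:pairing} of the pairing, or equivalently~\eqref{eq:kernel}.

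\emph{Conclusion.} Lemma~\ref{lem:cs} bounds each factor by $\qn{S_{(\mathbf u,0)}}{4}\qn{S_{(\mathbf u,1)}}{4}$. As $\mathbf u$ ranges over $\{0,1\}^3$, the pairs $(\mathbf u,0),(\mathbf u,1)$ exhaust $\{0,1\}^4$ exactly once, so the product becomes $\prod_{\mathbf v}\qn{S_{\mathbf v}}{4}$, which is the claimed inequality.

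\emph{Main obstacle.} The only delicate point is the first step. We must confirm that the indexing convention of~\eqref{eq:bgj-cube-recursion} really produces, for each fixed $h$, the $Q^3$ inner functional of the family $\{S^{(h)}_{\mathbf u}\}$. This means checking that the expectation over $a_1$ can be pulled outside, with the remaining variables $a_4,a_3,a_2$ playing the roles of the $k=3$ variables, and that no stray phases from the Weyl conventions appear. This follows by unwinding one level of the recursion, and phases are in any case harmless since we only bound absolute values.
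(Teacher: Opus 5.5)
Your proposal is correct and follows the paper's proof essentially step for step. Both unwind one level of the recursion to write the $Q^4$ functional as an average of $Q^3$ functionals, apply the $k=3$ generalized Schwarz inequality, then use Hölder with eight exponents equal to $8$ to obtain $\prod_{\mathbf u}\langle S_{(\mathbf u,0)},S_{(\mathbf u,1)}\rangle_{Q^4}$, and finish with Lemma~\ref{lem:cs}.
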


\begin{proof}
By \eqref{eq:uniformity-functional} and \eqref{eq:bgj-cube-recursion} we have
\begin{equation}\label{eq:q4-functional-recursion}
\left\langle
\left\{
S_{\mathbf v}
\right\}_{\mathbf v\in\{0,1\}^4}
\right\rangle_{Q^4}
=
\E_{h\in A}
\left\langle
\left\{
S_{\mathbf u}^{(h)}
\right\}_{\mathbf u\in\{0,1\}^3}
\right\rangle_{Q^3}.
\end{equation}
By H\"older's inequality applied to the eight factors with
exponent $8$,
\begin{equation}\label{eq:q4-holder}
\E_{h\in A}
\prod_{\mathbf u\in\{0,1\}^3}
\qn{S_{\mathbf u}^{(h)}}{3}
\leq
\prod_{\mathbf u\in\{0,1\}^3}
\left(
\E_{h\in A}
\qn{S_{\mathbf u}^{(h)}}{3}^{8}
\right)^{1/8}.
\end{equation}
Combining~\eqref{eq:q4-functional-recursion}, the case $k=3$ of~\eqref{eq:q3-generalized-schwarz}, and~\eqref{eq:q4-holder}, we obtain
\[
\left|
\left\langle
\left\{
S_{\mathbf v}
\right\}_{\mathbf v\in\{0,1\}^4}
\right\rangle_{Q^4}
\right|
\leq
\prod_{\mathbf u\in\{0,1\}^3}
\left\langle
S_{(\mathbf u,0)},
S_{(\mathbf u,1)}
\right\rangle_{Q^4}.
\]
The claim now follows from Lemma~\ref{lem:cs}.
\end{proof}

\subsection{Uniform convexity}

Hanner's classical inequality states that, for \(q\geq2\) and
\(f_1,f_2\in L^q\),
\[
    \|f_1+f_2\|_{L^q}^q
    +
    \|f_1-f_2\|_{L^q}^q
    \leq
    \left(
        \|f_1\|_{L^q}
        +
        \|f_2\|_{L^q}
    \right)^q
    +
    \left|
        \|f_1\|_{L^q}
        -
        \|f_2\|_{L^q}
    \right|^q.
\]
In particular, this inequality implies uniform convexity of
\(\bigl(L^q,\|\cdot\|_{L^q}\bigr)\).

Hanner's inequality was originally established by Hanner~\cite{Hanner1956}. Hatami later proved the corresponding inequality for scalar-valued Gowers uniformity norms~\cite{Hatami2009}.

We prove an analogous inequality for the quantum uniformity measures \(\|\cdot\|_{Q^k}\) and deduce that the corresponding spaces of linear operators on
$\mathcal H^{\otimes n}$, equipped with the norm
$\|\cdot\|_{Q^k}$, is uniformly convex

\begin{proposition}
\label{prop:hanner-Qk}
Let \(k\in \{2, 3, 4\}\) and set \(q:=2^k\). For all linear operators $S_1,S_2$ on $\mathcal H^{\otimes n}$,
\[
    \|S_1+S_2\|_{Q^k}^q
    +
    \|S_1-S_2\|_{Q^k}^q
    \leq
    \left(
        \|S_1\|_{Q^k}
        +
        \|S_2\|_{Q^k}
    \right)^q
    +
    \left|
        \|S_1\|_{Q^k}
        -
        \|S_2\|_{Q^k}
    \right|^q.
\]
\end{proposition}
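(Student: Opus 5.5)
We follow Hatami's strategy for scalar Gowers norms~\cite{Hatami2009}: expand the $q$-th powers as sums of uniformity inner functionals and bound each term using the generalized quantum Schwarz inequality.

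\textbf{Step 1: multilinear expansion.} By~\eqref{eq:definition} and the recursive description~\eqref{eq:bgj-cube-recursion}, $\qn{X}{k}^{q}=\langle\{X\}_{\mathbf v\in\{0,1\}^k}\rangle_{Q^k}$ with the constant family. This functional is linear in the entries indexed by $\mathbf v$ of even weight and conjugate-linear in those of odd weight, since odd positions enter through adjoints. Put $S_{\mathbf v}\in\{S_1,\pm S_2\}$ and expand. Summing the expansions for $S_1+S_2$ and $S_1-S_2$ cancels every term in which $S_2$ occupies an odd number of vertices. Hence
\[
\qn{S_1+S_2}{k}^q+\qn{S_1-S_2}{k}^q=2\sum_{\substack{B\subseteq\{0,1\}^k\\ |B|\ \text{even}}}\left\langle\{S^{B}_{\mathbf v}\}\right\rangle_{Q^k},
\]
where $S^{B}_{\mathbf v}=S_2$ for $\mathbf v\in B$ and $S^{B}_{\mathbf v}=S_1$ otherwise.

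\textbf{Step 2: bound each term.} Take absolute values and apply the generalized quantum Schwarz inequality. For $k=2,3$ this is~\eqref{eq:q3-generalized-schwarz}, and for $k=4$ it is the preceding Proposition. Each term is then at most $\qn{S_1}{k}^{q-|B|}\qn{S_2}{k}^{|B|}$. Write $x=\qn{S_1}{k}$, $y=\qn{S_2}{k}$. Summing over subsets $B$ of even size $j$ gives
\[
2\sum_{j\ \text{even}}\binom{q}{j}x^{q-j}y^{j}=(x+y)^q+(x-y)^q=(x+y)^q+|x-y|^q,
\]
where the last equality uses that $q$ is even. This is exactly the right-hand side of the claimed inequality.

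\textbf{Main obstacle.} The delicate point is Step 1. We must verify that the expansion really is multilinear/conjugate-multilinear vertex by vertex under the recursion~\eqref{eq:bgj-cube-recursion}. In particular, we need the vertex $(\mathbf u,1)$ to appear only through $S^*_{(\mathbf u,1)}$ at each level, and adjoints taken at deeper levels flip the parity consistently. The cancellation in Step 1 requires that each $S_2$-vertex contributes a factor $\pm1$ determined only by the sign of $S_2$. Since $\overline{-1}=-1$, the sign is preserved under conjugation. Each $S_2$-occurrence therefore contributes a factor $-1$ regardless of parity, so the bookkeeping reduces to checking that the functional is additive in each slot, either linearly or conjugate-linearly.

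Uniform convexity then follows as in the classical $L^q$ case, using the resulting Clarkson-type bound.
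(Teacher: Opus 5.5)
Your proposal is correct and follows essentially the same route as the paper. Both arguments expand the two $q$-th powers via the (conjugate-)multilinearity of $\langle\{S_{\mathbf v}\}\rangle_{Q^k}$, so that only configurations with an even number of $S_2$-entries survive, each with coefficient $2$. Each surviving term is then bounded by the generalized quantum Schwarz inequality: the result of Bu, Gu, and Jaffe for $k=2,3$, and the preceding Proposition for $k=4$. The sum is finally collapsed using the even part of the binomial expansion.

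Your version is slightly more careful than the paper's write-up, which leaves two points implicit. You take absolute values explicitly, which is valid because the left-hand side is real. You also note that $\overline{-1}=-1$, so the linear versus conjugate-linear parity bookkeeping does not affect the cancellation.
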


\begin{proof}
For each subset $E\subseteq\{0,1\}^k$, define
\[
    S_{\mathbf v}^E:=
    \begin{cases}
        S_2, & \mathbf v\in E,\\
        S_1, & \mathbf v\notin E.
    \end{cases}
\]

We have the following estimate:
\begin{align*}
    \|S_1+S_2\|_{Q^k}^q
    +
    \|S_1-S_2\|_{Q^k}^q
    &=
    \sum_{E\subseteq\{0,1\}^k}
    \bigl(1+(-1)^{|E|}\bigr)
    \left\langle
        \left\{S_{\mathbf v}^E\right\}_{\mathbf v\in\{0,1\}^k}
    \right\rangle_{Q^k}
    \\
    &=
    2\sum_{\substack{E\subseteq\{0,1\}^k\\
                     |E|\ \mathrm{even}}}
    \left\langle
        \left\{S_{\mathbf v}^E\right\}_{\mathbf v\in\{0,1\}^k}
    \right\rangle_{Q^k}
    \\
    &\leq
    2\sum_{\substack{E\subseteq\{0,1\}^k\\
                     |E|\ \mathrm{even}}}
    \|S_1\|_{Q^k}^{q-|E|}
    \|S_2\|_{Q^k}^{|E|}
    \\
    &=
    2\sum_{\substack{0\leq j\leq q\\ j\ \mathrm{even}}}
    \binom{q}{j}
    \|S_1\|_{Q^k}^{q-j}
    \|S_2\|_{Q^k}^j
    \\
    &=
    \left(
        \|S_1\|_{Q^k}
        +
        \|S_2\|_{Q^k}
    \right)^q
    +
    \left|
        \|S_1\|_{Q^k}
        -
        \|S_2\|_{Q^k}
    \right|^q.
\end{align*}

The first equality follows by expanding the quantum uniformity inner functional in its $2^k$ arguments. The inequality follows from the generalized quantum Schwarz inequality, for $k=2,3$ by~\cite[Propositions~39 and~41]{BGJ}, and for $k=4$ by the proposition in the preceding subsection. Grouping the terms according to $j=|E|$ gives the binomial coefficient $\binom{q}{j}$. The final equality is the even part of the binomial expansion, since $q$ is even.
\end{proof}

\begin{corollary}
\label{cor:uniform-convexity-Qk}
Let $k\in\{2,3,4\}$ and set $q:=2^k$. Then the space of linear operators on $\mathcal H^{\otimes n}$, equipped with the norm $\|\cdot\|_{Q^k}$, is uniformly convex and its modulus of convexity satisfies
\[
    \delta(\varepsilon)
    \geq
    1-
    \left(
        1-
        \left(\frac{\varepsilon}{2}\right)^q
    \right)^{1/q},
    \qquad 0<\varepsilon\leq2.
\]
In particular, the norm $\|\cdot\|_{Q^k}$ is
strictly convex.
\end{corollary}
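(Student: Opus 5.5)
The plan is to deduce the modulus-of-convexity bound directly from Proposition~\ref{prop:hanner-Qk}, in the same way uniform convexity of $L^q$ follows from Hanner's inequality. Recall that
\[
\delta(\varepsilon)=\inf\Bigl\{1-\Bigl\|\tfrac{S_1+S_2}{2}\Bigr\|_{Q^k}:\ \|S_1\|_{Q^k}=\|S_2\|_{Q^k}=1,\ \|S_1-S_2\|_{Q^k}\geq\varepsilon\Bigr\}.
\]
Since $\|\cdot\|_{Q^k}$ is a norm for $k\in\{2,3,4\}$ (by \cite[Proposition~39]{BGJ}, \cite[Theorem~42]{BGJ} and Theorem~\ref{thm:main}), this quantity is defined. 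The definition may equivalently use $\leq1$, but the unit-sphere version is enough here.

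Fix $S_1,S_2$ with $\|S_1\|_{Q^k}=\|S_2\|_{Q^k}=1$ and $\|S_1-S_2\|_{Q^k}\geq\varepsilon$. Proposition~\ref{prop:hanner-Qk} then gives
\[
\|S_1+S_2\|_{Q^k}^q+\|S_1-S_2\|_{Q^k}^q\leq 2^q+0=2^q .
\]
It follows that $\|S_1+S_2\|_{Q^k}^q\leq 2^q-\varepsilon^q$. Dividing by $2^q$ and using homogeneity \eqref{eq:q-homogeneity}, we get
\[
\Bigl\|\tfrac{S_1+S_2}{2}\Bigr\|_{Q^k}\leq\Bigl(1-\bigl(\tfrac{\varepsilon}{2}\bigr)^q\Bigr)^{1/q}.
\]
Taking the infimum yields the stated lower bound on $\delta(\varepsilon)$. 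The right-hand side is strictly positive for every $0<\varepsilon\leq2$, so the space is uniformly convex. Finite dimensionality is not even needed for this step.

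For strict convexity, suppose $S_1\neq S_2$ lie on the unit sphere. Apply the bound with $\varepsilon=\|S_1-S_2\|_{Q^k}>0$, which is positive by definiteness (Observation~\ref{obs:norm-reduction} and \eqref{eq:zero-term}). This gives $\|(S_1+S_2)/2\|_{Q^k}<1$, which is strict convexity.

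The only step that needs care is checking that Proposition~\ref{prop:hanner-Qk} applies in the right normalization. With equal norms $1$, the term $|\|S_1\|-\|S_2\||^q$ vanishes and $(1+1)^q=2^q$, so the algebra is immediate. I therefore expect no real obstacle: the substance lies entirely in the Hanner-type inequality, which rests on the generalized quantum Schwarz inequality established earlier.
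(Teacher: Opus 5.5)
Your proposal is correct and follows the same route as the paper, which cites Hanner's Theorem~2 for the standard deduction that you write out explicitly: apply Proposition~\ref{prop:hanner-Qk} to unit vectors, rescale by homogeneity, and read off both the bound on $\delta(\varepsilon)$ and strict convexity. Your remark that the unit-sphere definition suffices is also fine, since the closed-ball version follows from the same computation: $(a+b)^q+|a-b|^q\le 2^q$ for $a,b\in[0,1]$ because this convex function attains its maximum at a vertex of the square.
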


\begin{proof}
It is a standard consequence of Hanner's inequality for $q\geq2$ that the norm is uniformly convex with the stated modulus of convexity, see~\cite[Theorem~2]{Hanner1956}.
Thus the conclusion follows immediately from Proposition~\ref{prop:hanner-Qk}.
Since every uniformly convex norm is strictly convex, the final assertion follows as well.
\end{proof}

\section{Norming linear systems}

To prepare for the proof of Theorem~\ref{thm:higher-failure}, we recall some notions from the theory of norming linear systems and construct a family of linear systems that will be used in the proof.

\subsection{Basic notions} We follow the terminology of Cho et al.~\cite{CCLSV}. For a positive integer $s$, write
$[s]=\{1,\ldots,s\}$. An $m\times\ell$
\emph{linear system} over $\F_p$ is a matrix
\[
  L=(L_{ij})_{i\in[m],\,j\in[\ell]}
  \in\F_p^{m\times\ell}
\]
with linearly independent rows. We use the same symbol for the matrix and the corresponding homogeneous system of linear equations, and write
\[
  R(L)\subseteq\F_p^\ell
\]
for its row space.

For each $n\geq1$, define
\[
  \mathrm{Sol}(L)
  :=
  \left\{
    (x_1,\ldots,x_\ell)\in(\F_p^n)^\ell
    \;\middle|\;
    \sum_{j=1}^{\ell}L_{ij}x_j=0
    \text{ for every }i\in[m]
  \right\}.
\]
For a real-valued function $f:\F_p^n\to\mathbb R$, define
\[
  t_L(f)
  :=
  \E_{(x_1,\ldots,x_\ell)\in\mathrm{Sol}(L)}
  \prod_{j=1}^{\ell}f(x_j).
\]
Set
\[
  \|f\|_L:=|t_L(f)|^{1/\ell}.
\]
The system $L$ is \emph{norming} if $\|\cdot\|_L$ is a norm on the real-valued functions on $\F_p^n$ for every $n\geq1$.

\begin{remark}
The classical Gowers uniformity norms are special cases of the preceding construction. For \(d\geq2\), let \(L_{^d}\) be a linear system whose solution set consists of the \(d\)-dimensional additive cubes
\[
  \bigl(x+\omega\cdot h\bigr)_{\omega\in\{0,1\}^d},
  \qquad
  \omega\cdot h:=\sum_{i=1}^d \omega_i h_i.
\]
Since this parametrization of \(\mathrm{Sol}(L_{^d})\) by \(x,h_1,\ldots,h_d\in\F_p^n\) is bijective, for every real-valued function \(f:\F_p^n\to\mathbb R\) we have
\[
  t_{L_{^d}}(f)
  =
  \E_{x,h_1,\ldots,h_d\in\F_p^n}
  \prod_{\omega\in\{0,1\}^d}
  f\bigl(x+\omega\cdot h\bigr)
  =
  \|f\|_{U^d}^{2^d}.
\]
As \(L_{^d}\) has \(2^d\) variables, it follows that
\[
  \|f\|_{L_{^d}}=\|f\|_{U^d}.
\]
\end{remark}

Two linear systems $L_1$ and $L_2$ are \emph{isomorphic}, written $L_1\cong L_2$, if, after a permutation of the variables, they have the same row space. Equivalently, one can be obtained from the other by invertible row operations and a permutation of columns.

A linear system $L$ is \emph{indecomposable} if there do not exist linear systems $L_1$ and $L_2$, each involving a nonempty set of variables, such that
\[
  L\cong L_1\oplus L_2.
\]
If
\[
  L\cong L_1\oplus\cdots\oplus L_s,
\]
where every $L_i$ is indecomposable, then the systems $L_1,\ldots,L_s$ are called the \emph{components} of $L$. By~\cite[Theorem~4.8]{CCLSV}, any two components of a norming system are isomorphic.

\subsection{The systems $L_r$} We now introduce a family of linear systems $L_r$. Their construction is motivated by the Weyl moments introduced in Section~6.

For each even $r\geq4$, let $L_r$ be the
$r\times2r$ linear system in the variables
\[
  x_1,\ldots,x_r,y_1,\ldots,y_r
\]
defined by
\[
  \begin{aligned}
    x_i-y_i=x_r-y_r,
      &&1\leq i\leq r-1,\\
    \sum_{i=1}^{r/2}x_i
      =\sum_{i=r/2+1}^{r}x_i.
  \end{aligned}
\]
For example, for $r = 4$ 
\[
L_4:=
\begin{pmatrix}
1 & 0 & 0 & -1 & -1 & 0 & 0 & 1\\
0 & 1 & 0 & -1 & 0 & -1 & 0 & 1\\
0 & 0 & 1 & -1 & 0 & 0 & -1 & 1\\
1 & 1 & -1 & -1 & 0 & 0 & 0 & 0
\end{pmatrix}.
\]

The first $r-1$ rows are linearly independent, and every nonzero linear combination of them has a nonzero $y$-coefficient. Since the last row has zero $y$-coefficients, the $r$ rows are linearly independent.

\begin{lemma}\label{lem:higher-indecomposable}
For every even $r\geq4$, the linear system $L_r$ is indecomposable.
\end{lemma}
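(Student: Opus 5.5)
The plan is to reduce indecomposability to a concrete statement about the row space $R(L_r)\subseteq\F_p^{2r}$, and then derive a contradiction from any proposed splitting of the variables. If $L_r\cong L_1\oplus L_2$, then after permuting columns the variable set $\{x_1,\ldots,x_r,y_1,\ldots,y_r\}$ is partitioned into two nonempty sets $V_1\sqcup V_2$. The row space then decomposes as $R(L_r)=R_1\oplus R_2$, where $R_i$ consists of vectors supported on $V_i$. Consequently, for every $v\in R(L_r)$ the coordinate projection $\pi_{V_i}(v)$ equals the $R_i$-component of $v$, and hence lies in $R(L_r)$. So it suffices to exhibit, for every nontrivial partition, a vector of $R(L_r)$ whose projection onto one part leaves $R(L_r)$.

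First I will describe $R(L_r)$ explicitly. Write a vector as $(a,b)$ with $x$-coefficients $a\in\F_p^r$ and $y$-coefficients $b\in\F_p^r$, and put $s:=(1,\ldots,1,-1,\ldots,-1)$, with $r/2$ entries of each sign. A general combination $\sum_{i<r}c_i(\text{row }i)+\lambda(\text{last row})$ has
\[
b_i=-c_i\ (i<r),\qquad b_r=\textstyle\sum_{i<r} c_i,\qquad a+b=\lambda s.
\]
Hence
\[
R(L_r)=\Bigl\{(a,b):\ \textstyle\sum_i b_i=0,\ a+b\in\operatorname{span}(s)\Bigr\}.
\]
The inclusion $\subseteq$ is the computation above, and equality follows by comparing dimensions, since both sides have dimension $r$.

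Next comes the case analysis.

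\emph{Step 1: the $x$-variables lie in one part.} Project the last row $(s,0)$ onto $V_1$. The result is $(a',0)$, where $a'$ is $s$ restricted to $V_1\cap\{x_i\}$. This must be a multiple of $s$, and $s$ has no zero entries, so either all $x_i$ lie in $V_1$ or none do. After relabelling, all $x$-variables lie in $V_1$, and $V_2$ is a nonempty set of $y$-variables.

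\emph{Step 2: the $y$-variables lie in $V_2$.} Pick $y_j\in V_2$. For any $k\neq j$, the vector $(-(e_j-e_k),\,e_j-e_k)$ lies in $R(L_r)$, taking $\lambda=0$. Its projection onto $V_2$ has only $y$-coefficients. If $y_k\notin V_2$, those coefficients are $e_j$, whose sum is nonzero, a contradiction. So all $y_k\in V_2$, and therefore $V_1=\{x_i\}$ and $V_2=\{y_i\}$.

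\emph{Step 3: final contradiction.} Project the same vector onto $V_1$. This gives $(-(e_j-e_k),0)$, for which $a+b=-(e_j-e_k)$. That vector has exactly two nonzero entries, while $r\geq4$ and every entry of $s$ is nonzero. So it is not a multiple of $s$, which is a contradiction. Therefore $L_r$ is indecomposable.

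The main point needing care is the explicit description of $R(L_r)$ together with the projection criterion; once these are in place, the three steps above are short.
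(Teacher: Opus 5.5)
Your proof is correct, but it takes a different route from the paper's. Both arguments rest on the same fact: a splitting $V_1\sqcup V_2$ forces $R(L_r)$ to be closed under coordinate projection onto each part.

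The paper never writes down $R(L_r)$ and instead argues by minimum support. It asserts that every nonzero vector of $R(L_r)$ has support of size at least four. Each of the first $r-1$ rows $q_i$ has support $\{x_i,y_i,x_r,y_r\}$ of size exactly four, so it cannot meet both parts: its two projections would be nonzero vectors of $R(L_r)$ with smaller support. All these supports contain $x_r,y_r$ and together cover all $2r$ variables, so one part is empty. This is shorter and more conceptual, since overlapping unsplittable supports covering every variable force indecomposability, and it never tracks where the $x$- and $y$-variables go.

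However, the paper states the weight bound without proof, and checking it needs essentially your description $R(L_r)=\{(a,b):\sum_i b_i=0,\ a+b\in\operatorname{span}(s)\}$. If $\lambda=0$, then $a=-b$ with $b$ a nonzero sum-zero vector, giving support at least four. If $\lambda\neq0$, then $a_i+b_i=\lambda s_i\neq0$ for every $i$, giving support at least $r\geq4$.

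Your version is self-contained and shows exactly where the hypothesis enters. The bound $r\geq4$ is used only in Step 3, and only $r\neq2$ is needed there. That step genuinely fails for $r=2$: then $-(e_1-e_2)=-s$, and $L_2$ does split as $x_1=x_2$, $y_1=y_2$. The cost is a three-step case analysis in place of the paper's one-line connectivity argument.
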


\begin{proof}
Observe that every nonzero vector in $R(L_r)$ has support of size at least four.

Suppose that $L_r$ is decomposable. By definition, its variables admit a nontrivial partition into two parts such that
\[
  R(L_r)=R_1\oplus R_2,
\]
where each $R_j$ is supported on the corresponding part.

For each $1\leq i<r$, let $q_i$ be the coefficient vector of
\[
  x_i-y_i-x_r+y_r=0.
\]
Since
\[
  \operatorname{supp}(q_i)=\{x_i,y_i,x_r,y_r\},
\]
this support cannot meet both parts: otherwise, the two components of $q_i$ in $R_1\oplus R_2$ would be nonzero vectors in $R(L_r)$ with support of size less than four. Hence $\operatorname{supp}(q_i)$ lies entirely in one part.

All these supports contain $x_r$ and $y_r$, so they lie in the same part. Their union contains all $2r$ variables, contradicting the nontriviality of the partition. Therefore $L_r$ is indecomposable.
\end{proof}

\subsection{The systems $L^{(k)}$} For every integer $k\geq5$, define the direct-sum system
\[
  L^{(k)}
  :=
  L_4
  \oplus
  \bigoplus_{j=2}^{k-2}L_{2^j}.
\]

The reason for the additional $L_4$ summand will become clear in Section~6, when we compute the $\qn{\,\cdot\,}{k}$ of rank one operators.

We first establish some properties of $L^{(k)}$. Since $L_r$ has $2r$ variables, $L^{(k)}$ has
\(
  8 +\sum_{j=2}^{k-2}2^{j+1}
  =
  2^k
\)
variables. Its associated functional is therefore
\begin{equation}
  \|f\|_{L^{(k)}}
  =
  |t_{L^{(k)}}(f)|^{1/2^k}.
\end{equation}
Since $L^{(k)}$ is a direct sum, its solution set factors over the summands~\cite[Section~4]{CCLSV},
\begin{equation}\label{eq:direct-sum-count}
  t_{L^{(k)}}(f)
  =
  t_{L_4}(f)
  \prod_{j=2}^{k-2}t_{L_{2^j}}(f).
\end{equation}

By Lemma~\ref{lem:higher-indecomposable}, each summand of $L^{(k)}$ is indecomposable. Hence these summands are precisely the components of $L^{(k)}$. In particular, since $k\geq5$, $L^{(k)}$ has components $L_4$ and $L_8$, which are not isomorphic since they have eight and sixteen variables, respectively. Therefore, by~\cite[Theorem~4.8]{CCLSV}, $L^{(k)}$ is not norming for every $k\geq5$. Hence, for each $k\geq5$, there exists an integer $n\geq1$ such that $\|\cdot\|_{L^{(k)}}$ is not a norm.

\section{Failure of the triangle inequality for higher orders}

We now connect the linear systems introduced in the preceding section with the quantum uniformity measures. For a function $f:\F_p^n\to\mathbb R$, define
\[
  \psi_f
  :=
  \sum_{x\in\F_p^n} f(x)e_x
  \in\mathcal H^{\otimes n}.
\]

If $\qn{\,\cdot\,}{k}$ were a norm and agreed with $\|\cdot\|_{L^{(k)}}$ up to a strictly positive constant under a linear injection, then $\|\cdot\|_{L^{(k)}}$ would also be a norm, contradicting the previous section and proving Theorem~\ref{thm:higher-failure}.

This is precisely the case for the linear injection
\[
  f\longmapsto \psi_fe_0^*.
\]
To establish the desired property, we first need to introduce the following auxiliary quantity.

For
$\psi\in\mathcal H^{\otimes n}$, define the $r$-th Weyl moment
\begin{equation}\label{eq:higher-moment-def}
  m_r(\psi)
  :=
  \E_{a\in A}
  |\langle \psi,W_a\psi\rangle|^r.
\end{equation}

We first relate these to the systems $L_r$.

\begin{lemma}\label{lem:higher-identification}
For each even $r\geq4$, there exists a constant $c>0$ such that, for every function $f:\F_p^n\to\mathbb R$,
\[
  m_r(\psi_f)=c\,t_{L_r}(f).
\]
\end{lemma}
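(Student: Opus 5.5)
We compute $m_r(\psi_f)$ directly in the computational basis, expand the $r$-th power, and recognise the resulting count as $t_{L_r}(f)$. Write $a=(u,v)\in A$, so that $W_a\equiv Z^uX^v$. Then
\[
\langle \psi_f,W_a\psi_f\rangle\equiv\sum_{x\in\F_p^n} f(x+v)f(x)\,\omega^{u\cdot x},
\]
with the phase depending only on $a$, so it disappears in the modulus. Since $f$ is real, this quantity is, up to a unimodular factor, $g_v(u):=\sum_x f(x)f(x+v)\omega^{u\cdot x}$, and $\overline{g_v(u)}=\sum_x f(x)f(x+v)\omega^{-u\cdot x}$.

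Because $r$ is even, we can write $|g_v(u)|^r=g_v(u)^{r/2}\,\overline{g_v(u)}^{\,r/2}$. Expanding this product gives
\[
|g_v(u)|^r=\sum_{x_1,\dots,x_r}\prod_{i=1}^r f(x_i)f(x_i+v)\;\omega^{u\cdot(x_1+\cdots+x_{r/2}-x_{r/2+1}-\cdots-x_r)} .
\]
Averaging over $u\in\F_p^n$ by orthogonality of characters forces the linear relation $\sum_{i\le r/2}x_i=\sum_{i>r/2}x_i$. Now set $y_i=x_i+v$ for each $i$. Then the common difference $x_i-y_i=-v$ is the same for all $i$, which is exactly the family of equations $x_i-y_i=x_r-y_r$ in the definition of $L_r$. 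Summing over $v$ then turns the sum into a sum over all $(x,y)\in\mathrm{Sol}(L_r)$ of $\prod_i f(x_i)f(y_i)$.

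The map $(x_1,\dots,x_r,v)\mapsto(x,y)$ is a bijection between the tuples satisfying the single balancing equation together with a free $v$, and $\mathrm{Sol}(L_r)$. Hence
\[
m_r(\psi_f)=p^{-2n}\sum_{v}\sum_{x:\ \text{balanced}} \prod_i f(x_i)f(y_i)=p^{-2n}\,|\mathrm{Sol}(L_r)|\;t_{L_r}(f).
\]
Since $L_r$ has $r$ independent rows in $2r$ variables, $|\mathrm{Sol}(L_r)|=p^{n(2r-r)}=p^{nr}$. This gives the lemma with $c=p^{n(r-2)}>0$.

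The main thing to check carefully is the bookkeeping: that the Weyl phase is genuinely independent of $x$, that each $f(x_i)$ pairs with its own $f(x_i+v)$ in the expansion, and the sign conventions for the $u$-orthogonality. The constant depends on $n$ but not on $f$, which is all the statement requires since $n$ is fixed.
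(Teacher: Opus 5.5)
Your proof is correct and is essentially the paper's argument: expand the even power, collapse the $u$-average by character orthogonality, and substitute $y_i=x_i\pm v$ to biject onto $\mathrm{Sol}(L_r)$. One harmless slip is in the constant. Since $\E_u\omega^{u\cdot z}=\delta_{z,0}$ already absorbs the $u$-normalization, the prefactor is $p^{-n}$ rather than $p^{-2n}$, so $c=p^{n(r-1)}$. As a check, $m_r(e_0)=p^{-n}$ while $t_{L_r}(\delta_0)=p^{-nr}$. The lemma is unaffected, since it only needs some $c>0$ independent of $f$.
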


\begin{proof}
For $a=(u,v)\in A$, the definition of the Weyl operators gives
\begin{equation}\label{eq:higher-weyl-coefficient}
  \langle \psi_f,W_{(u,v)}\psi_f\rangle
  \equiv
  \sum_{x\in\F_p^n}
  f(x)f(x-v)\omega^{u\cdot x}.
\end{equation}
Since $r$ is even, expanding the absolute $r$-th power in
\eqref{eq:higher-moment-def} and using
\eqref{eq:higher-weyl-coefficient} gives
\[
  m_r(\psi_f)
  =
  \E_{v\in\F_p^n}
  \sum_{x_1,\ldots,x_r\in\F_p^n}
  \left(\prod_{i=1}^r f(x_i)f(x_i-v)\right)
  \E_{u\in\F_p^n}\,
  \omega^{
    u\cdot\left(
      \sum_{i=1}^{r/2}x_i-
      \sum_{i=r/2+1}^{r}x_i
    \right)
  }.
\]
By character orthogonality
$\E_{u\in\F_p^n}\omega^{u\cdot z}=\delta_{z,0}$, we have
\begin{equation}\label{eq:higher-moment-expansion}
  m_r(\psi_f)
  =
  \E_{v\in\F_p^n}
  \sum_{\substack{x_1,\ldots,x_r\in\F_p^n\\
    \sum_{i=1}^{r/2}x_i
    =
    \sum_{i=r/2+1}^{r}x_i}}
  \prod_{i=1}^r f(x_i)f(x_i-v).
\end{equation}
Setting
\[
  y_i=x_i-v,
  \qquad 1\leq i\leq r,
\]
gives a bijection between the tuples appearing in
\eqref{eq:higher-moment-expansion} and $\mathrm{Sol}(L_r)$.
Therefore,
\[
  m_r(\psi_f)
  =
  \frac{1}{p^n}
  \sum_{(x_1,\ldots,x_r,y_1,\ldots,y_r)\in\mathrm{Sol}(L_r)}
  \prod_{i=1}^r f(x_i)f(y_i)
    =
  \frac{1}{p^n}|\mathrm{Sol}(L_r)|\,t_{L_r}(f)
\]
which proves the claim.
\end{proof}

\begin{lemma}
Fix an integer $k\geq5$. There exists a constant $c>0$ such that, for every function $f:\F_p^n\to\mathbb R$,
\[
  \qn{\psi_f e_0^*}{k}
  =
  c\,\|f\|_{L^{(k)}}.
\]
\end{lemma}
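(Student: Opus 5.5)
The plan is to compute $\qn{\psi e_0^*}{k}^{2^k}$ exactly for arbitrary $\psi$ by iterating the recursive identity~\eqref{eq:recursive}, using the fact that quantum derivatives of rank-one operators remain rank one. After that, Lemma~\ref{lem:higher-identification} and the factorization~\eqref{eq:direct-sum-count} turn the answer into $t_{L^{(k)}}(f)$.

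\emph{Step 1 (derivative of a rank-one operator).} For vectors $\varphi,\chi$, using $(\varphi\chi^*)^*=\chi\varphi^*$ and $\alpha_h(\varphi\chi^*)=(W_h\varphi)(W_h\chi)^*$, I get
\[
\partial_h(\varphi\chi^*)=\langle W_h\chi,\chi\rangle\,(W_h\varphi)\varphi^* .
\]
So the derivative is a scalar of modulus $|\langle\chi,W_h^*\chi\rangle|$ times a rank-one operator whose new ``right'' vector is the old left vector $\varphi$. I set $F_j(\varphi,\chi):=\qn{\varphi\chi^*}{j}^{2^j}$. Then~\eqref{eq:q-homogeneity} and~\eqref{eq:recursive} give
\[
F_j(\varphi,\chi)=\E_{h\in A}|\langle\chi,W_h\chi\rangle|^{2^{j-1}}F_{j-1}(W_h\varphi,\varphi),
\]
after reindexing $h\mapsto -h$ via~\eqref{eq:adjoint}.

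\emph{Step 2 (Weyl invariance).} I need $F_{j-1}(W_h\varphi,\varphi)=F_{j-1}(\varphi,\varphi)$ whenever the two arguments coincide up to a Weyl shift. More precisely, I need that $F_{j-1}(\varphi,\chi)$ is unchanged when $\varphi$ is replaced by $W_g\varphi$ for fixed $g$, at least along the chain that actually occurs. I will verify this by induction on $j$ using Step~1. Replacing $\varphi$ by $W_g\varphi$ changes the next scalar only through $|\langle W_g\varphi,W_hW_g\varphi\rangle|=|\langle\varphi,W_h\varphi\rangle|$, by~\eqref{eq:weyl}, and the next vector pair changes to $(W_hW_g\varphi,W_g\varphi)\equiv W_g(W_h\varphi,\varphi)$. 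So everything reduces to invariance under simultaneous Weyl shifts, which holds since $\tau$ and the Weyl phases are conjugation-invariant~\cite[Proposition~28]{BGJ}. This bookkeeping is the step I expect to be fiddliest.

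\emph{Step 3 (unwinding).} Start with $\varphi=\psi$ and $\chi=e_0$. The first step contributes $\E_h|\langle e_0,W_he_0\rangle|^{2^{k-1}}$, which is a positive constant: it is the fraction of $h=(u,0)$, namely $p^{-n}$. The remaining factor is $F_{k-1}(\psi,\psi)$. Thereafter both vectors equal $\psi$ up to Weyl shifts, so
\[
F_j(\psi,\psi)=m_{2^{j-1}}(\psi)\,F_{j-1}(\psi,\psi)
\]
for $j\ge3$. The chain ends at $F_2(\psi,\psi)=p^{-4n}\sum_a|\langle\psi,W_a\psi\rangle|^4=p^{-2n}m_4(\psi)$, using~\eqref{eq:q2}. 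Hence
\[
\qn{\psi e_0^*}{k}^{2^k}=C\,m_4(\psi)\prod_{j=2}^{k-2}m_{2^j}(\psi).
\]
The extra $m_4$ comes from the $Q^2$ base case, and it matches the extra $L_4$ summand in $L^{(k)}$.

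\emph{Step 4 (identification).} For $\psi=\psi_f$, Lemma~\ref{lem:higher-identification} converts each $m_r$ into a positive constant times $t_{L_r}(f)$. Then~\eqref{eq:direct-sum-count} gives $\qn{\psi_fe_0^*}{k}^{2^k}=c'\,t_{L^{(k)}}(f)$ with $c'>0$. The left side is nonnegative, so $t_{L^{(k)}}(f)=|t_{L^{(k)}}(f)|$, and taking $2^k$-th roots yields the claim with $c=c'^{1/2^k}$.
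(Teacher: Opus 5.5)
Your proposal is correct and follows essentially the same route as the paper. Both use the rank-one derivative formula, the resulting recursion $\qn{\psi\phi^*}{j}^{2^j}=m_{2^{j-1}}(\phi)\,\qn{\psi\psi^*}{j-1}^{2^{j-1}}$ for $j\geq3$, the $Q^2$ base case via the Fourier formula, and then Lemma~\ref{lem:higher-identification} with the direct-sum factorization; the only difference is that you derive the invariance of $\qn{W_h\psi\psi^*}{j-1}$ under left Weyl multiplication from conjugation invariance by unfolding one more derivative, whereas the paper simply cites \cite[Proposition~28]{BGJ}.
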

\begin{proof}
For $\psi,\phi\in\mathcal H^{\otimes n}$, one has
\[
  \partial_h(\psi\phi^*)
  =
  W_h\psi\phi^*W_h^*\phi\psi^*
  =
  \langle \phi,W_h^*\phi\rangle\,
  W_h\psi\psi^*.
\]
For every $j\geq3$, by absolute homogeneity~\eqref{eq:q-homogeneity} and~\cite[Proposition~28]{BGJ},
\begin{equation}
  \qn{\partial_h(\psi\phi^*)}{j-1}^{2^{j-1}}
  =
  |\langle \phi,W_h^*\phi\rangle|^{2^{j-1}}
  \qn{\psi\psi^*}{j-1}^{2^{j-1}}.
\end{equation}
Averaging over $h\in A$ and using~\eqref{eq:recursive},
\eqref{eq:adjoint}, and~\eqref{eq:higher-moment-def}, we obtain
\begin{equation}\label{eq:higher-rankone-recursion}
\qn{\psi\phi^*}{j}^{2^j}
=
m_{2^{j-1}}(\phi)
\qn{\psi\psi^*}{j-1}^{2^{j-1}}.
\end{equation}
By repeated application of
\eqref{eq:higher-rankone-recursion}, and using
\(m_r(e_0)=p^{-n}\), we obtain
\begin{equation}\label{eq:rankone-iteration}
  \qn{\psi_f e_0^*}{k}^{2^k}
  =
  \frac{1}{p^n}
  \left(
    \prod_{j=2}^{k-2}m_{2^j}(\psi_f)
  \right)
  \qn{\psi_f\psi_f^*}{2}^{4}.
\end{equation}
The remaining factor in
\eqref{eq:rankone-iteration} is
\begin{align}
  \qn{\psi_f\psi_f^*}{2}^{4}
  &\overset{\eqref{eq:q2}}{=}
  \sum_{a\in A}
  \left|
    \tau(W_a^*\psi_f\psi_f^*)
  \right|^4
  \notag\\
  &=
  \frac{1}{p^{4n}}
  \sum_{a\in A}
  \left|
    \langle\psi_f,W_a^*\psi_f\rangle
  \right|^4
  \notag\\
  &\overset{\eqref{eq:adjoint}}{=}
  \frac{1}{p^{4n}}
  \sum_{a\in A}
  \left|
    \langle\psi_f,W_a\psi_f\rangle
  \right|^4
  \notag\\
  &\overset{\eqref{eq:higher-moment-def}}{=}
  \frac{1}{p^{2n}}m_4(\psi_f).
\end{align}
Consequently,
\begin{equation}\label{eq:moment-product}
  \qn{\psi_f e_0^*}{k}^{2^k}
  =
  \frac{1}{p^{3n}}
  m_4(\psi_f)
  \prod_{j=2}^{k-2}m_{2^j}(\psi_f).
\end{equation}
Lemma~\ref{lem:higher-identification} implies there is a constant $c > 0$ such that
\[
\frac{1}{p^{3n}}
  m_4(\psi_f)
  \prod_{j=2}^{k-2}m_{2^j}(\psi_f) =  c^{2^k} \,t_{L_4}(f)
  \prod_{j=2}^{k-2}t_{L_{2^j}}(f)
\]
It follows from~\eqref{eq:moment-product} and \eqref{eq:direct-sum-count} that 
\begin{equation}
  \qn{\psi_f e_0^*}{k}^{2^k}
  =
  c^{2^k}\,t_{L^{(k)}}(f).
\end{equation}
This identity
also shows that \(t_{L^{(k)}}(f)\geq0\). Therefore,
\[
  \qn{\psi_f e_0^*}{k}
  =
  c\,
  |t_{L^{(k)}}(f)|^{1/2^k}
  =
  c\,\|f\|_{L^{(k)}},
\]
completing the proof.
\end{proof}

\begin{proof}[Proof of Theorem~\ref{thm:higher-failure}]
Fix $k\geq5$. By the preceding section, there exists an integer $n\geq1$ such that $\|\cdot\|_{L^{(k)}}$ is not a norm. Suppose, for contradiction, that $\|\cdot\|_{Q^k}$ satisfies the triangle inequality. Then, by Observation~\ref{obs:norm-reduction}, it is a norm. Since
\[
f\longmapsto \psi_f e_0^*
\]
is a linear injection, the preceding lemma implies that $\|\cdot\|_{L^{(k)}}$ is a norm, contradicting the choice of $n$. Therefore, $\|\cdot\|_{Q^k}$ does not satisfy the triangle inequality.
\end{proof}

\section{Norm properties of matrix-valued uniformity functionals}

Matrix-valued uniformity norms of order two were studied by Gowers and Hatami~\cite{GH}. Higher-order analogues on finite abelian groups were considered by Westdorp~\cite{Westdorp}, while Jamneshan and Thom~\cite{JT} considered matrix-valued uniformity norms of arbitrary order on finite groups. In this section, we give a complete classification of their norm properties.

Let $G$ be a finite group and let $N\geq1$. For
$F:A\to M_N(\C)$, write
\[
  \tau(T)=\frac{1}{N}\Tr(T),
  \qquad T\in M_N(\C),
\]
and, for $h\in G$, define
\[
  \Delta_hF(x)=F(hx)F(x)^*.
\]
Following~\cite{JT}, the $k$-th matrix-valued uniformity functional is defined by
\begin{equation}\label{eq:matrix-uk}
  \|F\|_{U^k(G)}^{2^k}
  =
  \E_{x,h_1,\ldots,h_k\in G}
  \tau\bigl(
    (\Delta_{h_k}\cdots\Delta_{h_1}F)(x)
  \bigr).
\end{equation}
Averaging in the last variable gives
\begin{equation}\label{eq:matrix-positive}
  \|F\|_{U^k(G)}^{2^k}
  =
  \E_{h_1,\ldots,h_{k-1}\in G}
  \tau\left(
  \left|
    \E_{x\in G}
    (\Delta_{h_{k-1}}\cdots\Delta_{h_1}F)(x)
  \right|^2
  \right),
\end{equation}
where $|T|^2=TT^*$.

The discussion of norm properties is therefore analogous to that for quantum uniformity measures in Section~2. Positivity follows from~\eqref{eq:matrix-positive}, while \eqref{eq:matrix-uk} gives absolute homogeneity,
\[
  \|\lambda F\|_{U^k(G)}
  =
  |\lambda|\|F\|_{U^k(G)}.
\]
For $k=1$, we have
\[
  \|F\|_{U^1(G)}
  =
  \left\|
    \E_{x\in G}F(x)
  \right\|_{2,\tau},
\]
so $\|\cdot\|_{U^1(G)}$ is a seminorm and, in general, fails definiteness.

For $k\geq2$, taking
$h_1=\cdots=h_{k-1}=e$ in~\eqref{eq:matrix-positive} gives
\begin{equation}\label{eq:matrix-zero-term}
  \|F\|_{U^k(G)}^{2^k}
  \geq
  \frac{1}{|G|^{k-1}}
  \tau\left(
    \left(
      \E_{x\in G}
      (F(x)F(x)^*)^{2^{k-2}}
    \right)^2
  \right).
\end{equation}
The right-hand side is strictly positive whenever $F\neq0$. Thus $\|\cdot\|_{U^k(G)}$ is definite for every $k\geq2$.

\begin{observation}\label{obs:matrix-norm-reduction}
For every $k\geq2$, the matrix-valued uniformity functional $\|\cdot\|_{U^k(G)}$ is a norm if and only if it satisfies the triangle inequality.
\end{observation}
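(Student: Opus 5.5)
The plan is to mirror the argument of Observation~\ref{obs:norm-reduction}. Fix $k\geq2$. One direction is immediate: if $\|\cdot\|_{U^k(G)}$ is a norm, it satisfies the triangle inequality by definition.

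For the converse, assume the triangle inequality holds, and check the remaining norm axioms.
\begin{itemize}
\item \emph{Nonnegativity.} By~\eqref{eq:matrix-positive}, $\|F\|_{U^k(G)}^{2^k}$ is an average of quantities $\tau(TT^*)=\frac1N\Tr(TT^*)\geq0$. So the $2^k$-th power is a nonnegative real number, and $\|F\|_{U^k(G)}$ is a well-defined nonnegative real.
\item \emph{Absolute homogeneity.} This comes from~\eqref{eq:matrix-uk}. Each application of $\Delta_h$ to $\lambda F$ multiplies the multiplicative constant $\mu$ by $\bar\mu$. Starting from $\lambda$, after one step the constant is $|\lambda|^2$, after which it remains real and gets squared. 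Hence the integrand scales by $|\lambda|^{2^k}$, and taking $2^k$-th roots gives $\|\lambda F\|_{U^k(G)}=|\lambda|\|F\|_{U^k(G)}$.
\item \emph{Definiteness.} This uses~\eqref{eq:matrix-zero-term}, and it is the one step needing care. Set $H=\E_x(F(x)F(x)^*)^{2^{k-2}}$. Each summand $(F(x)F(x)^*)^{2^{k-2}}$ is positive semidefinite, so $H\succeq0$. Moreover $H$ is nonzero if some $F(x)\neq0$, because a positive semidefinite sum vanishes only if every term vanishes, and $(FF^*)^{m}=0$ forces $FF^*=0$, hence $F=0$. Since $H$ is positive semidefinite and nonzero, $\tau(H^2)=\frac1N\Tr(H^2)>0$. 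Therefore $\|F\|_{U^k(G)}>0$ whenever $F\neq0$.
\end{itemize}

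One point deserves checking: the inequality~\eqref{eq:matrix-zero-term} itself. Restricting~\eqref{eq:matrix-positive} to the single term $h_1=\cdots=h_{k-1}=e$ drops only nonnegative terms, so this is valid. The iterated derivative at the identity then gives $(\Delta_e)^{k-1}F(x)=(F(x)F(x)^*)^{2^{k-2}}$. To confirm this, note that $\Delta_eF=FF^*$ is self-adjoint, and each further $\Delta_e$ squares it.

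Combining these observations with the assumed triangle inequality, $\|\cdot\|_{U^k(G)}$ is a norm on functions $G\to M_N(\C)$, which proves the observation.
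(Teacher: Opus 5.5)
Your proposal is correct and follows the paper's own argument: nonnegativity from the averaged form \eqref{eq:matrix-positive}, absolute homogeneity from \eqref{eq:matrix-uk}, and definiteness from the single term $h_1=\cdots=h_{k-1}=e$ in \eqref{eq:matrix-zero-term}. You also supply two details the paper leaves implicit: that $(\Delta_e)^{k-1}F=(FF^*)^{2^{k-2}}$, and that $\E_x(F(x)F(x)^*)^{2^{k-2}}$ is a nonzero positive semidefinite matrix whenever $F\neq0$.
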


For $k=2$, the triangle inequality was proved by Gowers and Hatami~\cite[Corollary~2.4]{GH}. Hence $\|\cdot\|_{U^2(G)}$ is a norm for every finite group $G$ and every $N\geq1$.

For every $k\geq5$, the failure of the triangle inequality follows immediately from Theorem~\ref{thm:higher-failure} and the pullback identity of Jamneshan and the author~\cite{JU}.

\begin{corollary}
For every integer $k\geq5$, there exist a finite abelian group $A$,
an integer $N\geq1$, and functions \(F_1,F_2:A\to M_N(\C)\) such that
\[
  \|F_1+F_2\|_{U^k(A)}
  >
  \|F_1\|_{U^k(A)}+\|F_2\|_{U^k(A)}.
\]
Moreover, $F$ and $G$ may be chosen to be Weyl orbit maps.
\end{corollary}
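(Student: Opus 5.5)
The plan is to transfer Theorem~\ref{thm:higher-failure} to the matrix-valued setting via a pullback identity. We take $A=\F_p^n\times\F_p^n$ and $N=p^n$. For an operator $T$ on $\mathcal H^{\otimes n}$, we define the Weyl orbit map
\[
F_T:A\to M_N(\C),\qquad F_T(x)=\alpha_x(T)=W_xTW_x^*.
\]
The key identity, which is the pullback identity of~\cite{JU}, should be
\[
\|F_T\|_{U^k(A)}=\qn{T}{k}
\]
for all $k\geq1$. This can be checked directly, and I will sketch that check. The map $T\mapsto F_T$ is linear, so $F_{S+T}=F_S+F_T$. Consequently, once the identity holds, the operators $S,T$ from Theorem~\ref{thm:higher-failure} yield $F_1=F_S$ and $F_2=F_T$ with
\[
\|F_1+F_2\|_{U^k(A)}=\qn{S+T}{k}>\qn{S}{k}+\qn{T}{k}=\|F_1\|_{U^k(A)}+\|F_2\|_{U^k(A)}.
\]
Both maps are Weyl orbit maps, which gives the ``moreover'' clause.

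To verify the identity, compute the multiplicative derivative:
\[
\Delta_hF_T(x)=\alpha_{x+h}(T)\alpha_x(T)^*=\alpha_x\bigl(\alpha_h(T)T^*\bigr)=\alpha_x(\partial_hT),
\]
using $\alpha_{x+h}=\alpha_x\alpha_h$ and that $\alpha_x$ is a $*$-automorphism. In other words, $\Delta_hF_T=F_{\partial_hT}$. Iterating gives
\[
(\Delta_{h_k}\cdots\Delta_{h_1}F_T)(x)=\alpha_x\bigl(\partial_{h_k}\cdots\partial_{h_1}T\bigr).
\]
The trace is $\alpha_x$-invariant, and the normalized trace $\tau=\frac1N\Tr$ agrees with the paper's $\tau$ since $N=p^n$. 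Averaging over $x$ therefore does nothing, and \eqref{eq:matrix-uk} reduces exactly to \eqref{eq:definition}.

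The main point needing care is the bookkeeping between conventions. One check is that the group action in $\Delta_h$, namely $hx$, is written additively as $x+h$, and that this ordering matches the ordering in $\partial_h$; since $A$ is abelian and $\alpha$ is a genuine representation with no phases, this holds. The other check is that the phase ambiguity in the $W_a$ cancels under conjugation, which is already noted in Section~2.
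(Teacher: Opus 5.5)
Your proposal is correct and follows the paper's proof. Like the paper, you take $A=\F_p^n\times\F_p^n$, $N=p^n$ and the Weyl orbit maps $F_S,F_T$, and you use linearity of $T\mapsto F_T$ together with the pullback identity $\|F_T\|_{U^k(A)}=\qn{T}{k}$ to transfer Theorem~\ref{thm:higher-failure}. The only difference is that the paper cites this identity from~\cite{JU}, while you verify it directly via $\Delta_hF_T=F_{\partial_hT}$ and conjugation-invariance of $\tau$; that verification is sound.
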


\begin{proof}
Fix a prime $p$. By Theorem~\ref{thm:higher-failure}, choose $n\geq1$ and operators $S,T$ on $\mathcal H^{\otimes n}$ such that
\[
  \qn{S+T}{k}>\qn{S}{k}+\qn{T}{k}.
\]
Set $A=\F_p^n\times\F_p^n$ and $N=p^n$, and let $F_S,F_T:A\to M_N(\C)$ be the corresponding Weyl orbit maps. Since the Weyl-orbit embedding $S\mapsto F_S$ is linear, $F_S+F_T=F_{S+T}$. Hence, by the pullback identity of~\cite{JU},
\[
  \|F_S+F_T\|_{U^k(A)}
  =
  \qn{S+T}{k}
  >
  \qn{S}{k}+\qn{T}{k}
  =
  \|F_S\|_{U^k(A)}+\|F_T\|_{U^k(A)}.
\]
\end{proof}

It remains to consider $k=3$ and $k=4$. The same example shows that the triangle inequality fails in both cases. Although the argument below does not directly extend to $k\geq5$, these cases have already been covered by the preceding corollary.

\begin{proposition}
For $k=3,4$, the matrix-valued Gowers functional $\|\cdot\|_{U^k}$ fails the triangle inequality, already for $G=\mathbb Z/2\mathbb Z$ and $M_2(\mathbb C)$.
\end{proposition}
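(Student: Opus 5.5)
The plan is to exhibit one explicit counterexample on $G=\mathbb Z/2\mathbb Z$ with values in $M_2(\C)$, and to check it for $k=3$ and $k=4$ by direct computation.

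\textbf{Reduction to finitely many matrices.} A function $F:G\to M_2(\C)$ is a pair of matrices $F(0)=P$ and $F(1)=Q$. For $h=0$ we have $\Delta_0F(x)=F(x)F(x)^*$, which is positive. For $h=1$ we have $\Delta_1F(0)=QP^*$ and $\Delta_1F(1)=PQ^*$. Since $G$ has only two elements, \eqref{eq:matrix-positive} expresses $\|F\|_{U^k}^{2^k}$ as an average over the $2^{k-1}$ choices of $(h_1,\ldots,h_{k-1})$. Each term is $\tau(|M|^2)$, where $M$ is the average of two explicit words in $P,Q,P^*,Q^*$.

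\textbf{Choice of test functions.} I would take $F_1$ and $F_2$ to be \emph{unitary-valued}, so that every term with some $h_i=0$ collapses towards the identity.
\begin{itemize}
\item Take $F_1\equiv I$ to be constant. Then every derivative is $I$, so $\|F_1\|_{U^k}=1$.
\item Take $F_2(0)=I$ and $F_2(1)=U$ for a unitary $U$ such as $\begin{pmatrix}0&1\\1&0\end{pmatrix}$ or a diagonal $\mathrm{diag}(1,i)$. Iterated $\Delta_1$ produce words in $U$ and $U^*$. Noncommutativity is absent here, but the normalized trace of averages of $I$ and $U^{\pm1}$ still gives a computable value $<1$.
\end{itemize}
If this scalar-like choice does not work, I would instead use a genuinely noncommuting pair: $F_2(0)=X$ (the Pauli $X$) and $F_2(1)=Z$. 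Their products anticommute, which forces cancellations in $\E_x$.

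\textbf{Verification.}
\begin{itemize}
\item Compute $\|F_2\|_{U^k}$ and $\|F_1+F_2\|_{U^k}$ for $k=3,4$ by expanding $\Delta$'s on the two points. The sum $F_1+F_2$ is no longer unitary-valued, so its derivatives are honest matrix products. Still, everything is a $2\times2$ computation with at most $8$ terms for $k=4$.
\item Check numerically that $\|F_1+F_2\|_{U^k}>\|F_1\|_{U^k}+\|F_2\|_{U^k}$. By homogeneity I am free to scale $F_2$ by $t>0$, so it suffices to find one $t$ with $\|I+tF_2\|_{U^k}>1+t\|F_2\|_{U^k}$. A first-order expansion near $t=0$ is convenient. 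The triangle inequality would force the derivative of $t\mapsto\|I+tF_2\|_{U^k}$ at $0$ to be at most $\|F_2\|_{U^k}$. The derivative is computable as a linear functional: roughly $\mathrm{Re}\,\tau$ of averages of $F_2$ and its conjugates. So I would look for $F_2$ with small $U^k$ norm but a large first variation, exploiting that $\tau$ over two points does not damp the $h=0$ terms.
\end{itemize}

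\textbf{Main obstacle.} The hardest step is finding the right explicit pair, since the inequality is presumably only marginally violated. I would first try the perturbative criterion with $F_2$ traceless, for instance $F_2(0)=Z$ and $F_2(1)=-Z$. If the first-order term vanishes, I would go to second order: compare the Hessian of $\|I+tF_2\|^{2^k}$ with $(1+t\|F_2\|)^{2^k}$. The second-order term picks up the $2^{k-2}$-fold products from the $h=0$ branch, with coefficient of order $2^k$. This coefficient should beat $\|F_2\|_{U^k}$ when $F_2$ has small uniformity norm but is not small pointwise.
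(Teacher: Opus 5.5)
Your proposal never fixes a pair $F_1,F_2$ and verifies a strict inequality, so as written it is a search plan rather than a proof. More seriously, every candidate you name provably satisfies the triangle inequality.

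\emph{Commuting choices.} Suppose all values lie in a commuting family of normal matrices. This covers $F_1\equiv I$ with $F_2=(I,U)$ for $U$ the flip or $\mathrm{diag}(1,i)$, and also $F_2=(Z,-Z)$. Conjugate by a fixed unitary $V$; this changes nothing, since $\Delta_h(VFV^*)=V(\Delta_hF)V^*$ and $\tau$ is conjugation invariant. Everything becomes diagonal. For $F=\mathrm{diag}(f_1,f_2)$ one has $\|F\|_{U^k}^{2^k}=\tfrac12\bigl(\|f_1\|_{U^k}^{2^k}+\|f_2\|_{U^k}^{2^k}\bigr)$. This is a genuine norm because the scalar Gowers norms are.

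\emph{The choice $F_2(0)=X$, $F_2(1)=Z$.} Here $\Delta_0F_2\equiv I$ and $\Delta_1F_2(x)=(-1)^xZX$, whose next derivatives are $(-1)^hI$. Hence $\|F_2\|_{U^k}=1$ for $k=3,4$. On the other hand, $|\tau(W)|\le\|W\|_{\mathrm{op}}$ gives $\|F\|_{U^k}\le\max_x\|F(x)\|_{\mathrm{op}}$. Therefore $\|I+tF_2\|_{U^k}\le 1+t=\|I\|_{U^k}+t\|F_2\|_{U^k}$ for every $t>0$. Unitary-valued test functions tend to sit at this operator-norm ceiling, which is the opposite of what a counterexample needs.

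\emph{The perturbative criterion.} At the base point $I$, the derivative of $t\mapsto\|I+tF_2\|_{U^k}$ at $t=0$ is $\mathrm{Re}\,\tau(\E_xF_2(x))$. This is at most $\|F_2\|_{U^1}\le\|F_2\|_{U^k}$. The first bound is Cauchy--Schwarz for $\tau$; the second is monotonicity in $k$, from $\|F\|_{U^k}^{2^k}=\E_h\|\Delta_hF\|_{U^{k-1}}^{2^{k-1}}$ and Jensen. So no $F_2$ has a ``large first variation'' at $I$. Moreover, if the first-order term is strictly below $\|F_2\|_{U^k}$ (for instance zero, for traceless $F_2$), then $1+t\|F_2\|_{U^k}$ dominates for small $t$. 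No Hessian comparison can change that, so your second-order fallback cannot work.

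\emph{The missing idea.} You need to test convexity symmetrically: find $F_0$ and $D$ with $\|F_0\pm tD\|_{U^k}<\|F_0\|_{U^k}$, so that $F_t+F_{-t}=2F_0$ breaks the triangle inequality. There the first-order terms cancel and the sign of the $t^2$-coefficient decides.

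Even this symmetric test fails at $F_0=I$. Through order $t^2$, $\|I+tD\|_{U^k}^{2^k}$ agrees with $\tau\bigl((CC^*)^{2^{k-1}}\bigr)$ for the constant $C=I+t\E_xD(x)$, because any two distinct cube vertices are independent on an abelian group. This normalized Schatten norm is convex.

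The paper therefore uses a non-scalar, rank-one-valued base point $F_0(0)=\begin{pmatrix}1&0\\1&0\end{pmatrix}$, $F_0(1)=\begin{pmatrix}1&0\\0&0\end{pmatrix}$, with
$F_t(0)=\begin{pmatrix}1&t\\1&-t\end{pmatrix}$ and $F_t(1)=\begin{pmatrix}1&-t\\0&t\end{pmatrix}$. It computes $\|F_t\|_{U^3}^8=\tfrac{45}{32}-\tfrac14t^2+O(t^4)$ and $\|F_t\|_{U^4}^{16}=\tfrac{405}{64}-\tfrac34t^2+O(t^4)$. That explicit negative second-order computation, at a suitable non-scalar base point, is what your proposal lacks.
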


\begin{proof}
For $t\in\mathbb R$, define
\[
F_t(0)=
\begin{pmatrix}
1&t\\
1&-t
\end{pmatrix},
\qquad
F_t(1)=
\begin{pmatrix}
1&-t\\
0&t
\end{pmatrix}.
\]
By direct computation,

\[
\begin{aligned}
\|F_t\|_{U^3}^{8}
&=c_1-\frac14t^2+O(t^4),\\
\|F_t\|_{U^4}^{16}
&=c_2-\frac34t^2+O(t^4),
\end{aligned}
\]
where $c_1=45/32$ and $c_2=405/64$.

Since their quadratic coefficients are negative, for every sufficiently small $t>0$ and $k=3,4$,
\[
\|F_{\pm t}\|_{U^k}
<\|F_0\|_{U^k}.
\]
But $F_t+F_{-t}=2F_0$. Hence
\[
\|F_t+F_{-t}\|_{U^k}
=2\|F_0\|_{U^k}
>\|F_t\|_{U^k}+\|F_{-t}\|_{U^k}.
\]
This proves the claim.
\end{proof}

We have therefore obtained the complete classification of the norm properties of matrix-valued uniformity functionals. For $k=1$, only definiteness fails in general. For $k=2$, the functional is a norm by~\cite[Corollary~2.4]{GH}. For $k=3,4$, the preceding example shows that the triangle inequality fails, and for every $k\geq5$ the same conclusion follows from the corollary above. Thus $k=2$ is the only order for which the matrix-valued uniformity functional defines a norm for every finite group and every matrix size.

In particular, the triangle-inequality assertion in
\cite[Corollary~A.7]{JT} does not hold in the stated
generality. The issue arises in identity (A.2) of \cite[Lemma~A.4]{JT}. The argument uses cyclicity of trace to justify a reordering of matrix factors. However, taking adjoints reverses the order of the factors, and this reversal cannot in general be achieved by cyclic permutations. Thus the claimed identity need not hold in general.

\vspace{\baselineskip}

\section*{Declaration of AI Assistance}

All arguments in the proof for $k=4$ were developed by the author. The proof strategy for $k\geq5$ failure
was also developed by the author. Having identified the need for linear systems
satisfying the Weyl moment identity in Lemma~\ref{lem:higher-identification},
the author discussed possible approaches to constructing such systems with AI.
All mathematical proofs and the exposition were written by the author.

\vspace{\baselineskip}

\end{document}